\def\BibTeX{{\rm B\kern-.05em{\sc i\kern-.025em b}\kern-.08em
    T\kern-.1667em\lower.7ex\hbox{E}\kern-.125emX}}
\renewenvironment{proof}[1][\proofname]{\noindent {\bfseries #1.} }{\qed }
\newtheorem{definition}{Definition}
\newtheorem{theorem}{Theorem}
\newtheorem{lemma}[theorem]{Lemma}
\newtheorem*{remark}{Remark}
\newcommand{\bv}[1]{\mathbf{#1}}		
\newcommand{\bvgrk}[1]{{\boldsymbol{#1}}}	
\newcommand{\vc}{\bv{c}}
\newcommand{\vw}{\bv{w}}
\newcommand{\vA}{\bv{A}}
\newcommand{\vC}{\bv{C}}
\newcommand{\vE}{\bv{E}}
\newcommand{\vF}{\bv{F}}
\newcommand{\vI}{\bv{I}}
\newcommand{\vJ}{\bv{J}}
\newcommand{\vM}{\bv{M}}
\newcommand{\vO}{\bv{O}}
\newcommand{\vP}{\bv{P}}
\newcommand{\vQ}{\bv{Q}}
\newcommand{\vT}{\bv{T}}
\newcommand{\vU}{\bv{U}}
\newcommand{\vV}{\bv{V}}
\newcommand{\vW}{\bv{W}}
\newcommand{\vX}{\bv{X}}
\newcommand{\vDelta}{\bvgrk{\Delta}}
\newcommand{\vTheta}{\bvgrk{\Theta}}
\newcommand{\vpi}{\bvgrk{\uppi}}
\newcommand{\vPi}{\bvgrk{\Pi}}
\newcommand{\vSigma}{\bvgrk{\Sigma}}
\newcommand{\vphi}{\bvgrk{\upphi}}
\newcommand{\norm}[1]{\|{#1} \|}
\newcommand{\curlybracketsbig}[1]{\left\{ #1 \right\}}
\newcommand{\curlybrackets}[1]{\{ #1 \}}
\newcommand{\squarebrackets}[1]{[ #1 ]}
\newcommand{\parenthesesbig}[1]{\left( #1 \right)}
\newcommand{\indicator}{\mathbbm{1}}
\newcommand{\wht}[1]{\widehat{#1}}		
\newcommand{\wtd}[1]{\widetilde{#1}}	
\newcommand{\R}{\mathbb{R}}				
\newcommand{\dm}[2]
{
	\IfStrEq{#2}{1}{\R^{#1}}{\R^{#1 \x #2}}
}
\newcommand{\N}{\mathcal{N}}			
\newcommand{\onevec}{\mathbf{1}} 		
\newcommand{\tr}{\textup{\textbf{tr}}} 			
\newcommand{\diag}{\textup{\textbf{diag}}} 		
\newcommand{\expctn}{\mathbb{E}} 	 	
\newcommand{\prob}{\textup{P}} 				
\newcommand*{\T}{{\mathpalette\@transpose{}}}
\newcommand*{\@transpose}[2]{\raisebox{\depth}{$\m@th#1\intercal$}}
\newcommand*{\x}{\mathsf{x}\mskip1mu} 	
\newcommand{\splitatcommas}[1]{%
	\begingroup
	\begingroup\lccode`~=`, \lowercase{\endgroup
		\edef~{\mathchar\the\mathcode`, \penalty0 \noexpand\hspace{0pt plus .1em}}%
	}\mathcode`,="8000 #1%
	\endgroup
}
\newcommand{\Item}[1]{%
	\ifx\relax#1\relax  \item \else \item[#1] \fi
	\abovedisplayskip=0pt\abovedisplayshortskip=0pt~\vspace*{-\baselineskip}
} 
\title{Mode Clustering for Markov Jump Systems
\thanks{N. Ozay and Z. Du were supported by ONR grant N00014-18-1-2501, L. Balzano and Z. Du were supported by AFOSR YIP award FA9550-19-1-0026, and L. Balzano was supported by AFOSR YIP award FA9550-19-1-0026, NSF BIGDATA award IIS-1838179, and NSF CAREER award CCF-1845076.}
}
\author[ ]{Zhe Du}
\author[ ]{Necmiye Ozay}
\author[ ]{Laura Balzano}
\affil[ ]{\textit{Department of Electrical Engineering and Computer Science}}
\affil[ ]{\textit{University of Michigan}}
\affil[ ]{Ann Arbor, USA}
\affil[ ]{{\{zhedu,necmiye,girasole\}@umich.edu}}
\date{}
\begin{document}
\maketitle

\begin{abstract}
In this work, we consider the problem of mode clustering in Markov jump models. This model class consists of multiple dynamical modes with a switching sequence that determines how the system switches between them over time. Under different active modes, the observations can have different characteristics. Given the observations only and without knowing the mode sequence, the goal is to cluster the modes based on their transition distributions in the Markov chain to find a reduced-rank Markov matrix that is embedded in the original Markov chain. Our approach involves mode sequence estimation, mode clustering and reduced-rank model estimation, where mode clustering is achieved by applying the singular value decomposition and k-means. We show that, under certain conditions, the clustering error can be bounded, and the reduced-rank Markov chain is a good approximation to the original Markov chain. Through simulations, we show the efficacy of our approach and the application of our approach to real world scenarios.
\end{abstract}

\section{Introduction}
Modeling dynamic systems has been a problem of great interest in the signal processing and control communities for decades. Many real-world phenomena cannot be described with one dynamical model, and so switched models wherein the dynamics transition between different system models have been studied and applied widely. In human-made systems, for example, a robot may have different dynamics under different battery levels or when different modules within the robot fail. In nature, the temperature and humidity level will have different fluctuations under different weather conditions; brain electricity signals will behave differently under different emotions of the test subject. Note that in all these examples, the modes can switch over time. To model this switching, one systematic and probabilistic way is to assume the mode switching follows a Markov chain where future modes do not depend on past modes given the most recent mode. This Markov jump model \cite{gupta2009networked, shi2015survey} has been used in power systems, air traffic management, economics, and communication systems \cite{ugrinovskii2005decentralized, loparo1990probabilistic, liu2011probabilistic, gopalakrishnan2017comparative, svensson2008optimal, zhu2017hmm}.

A key challenge for such models is the model compactness -- how does one represent such a complicated dynamical system with as simple a model as possible? 
For example, modes like weather conditions and human emotions have extremely complex underlying dynamics with strong correlations over time. To satisfy the Markov property, one may concatenate underlying modes into a single Markov state, and Markov chains built in this way will have a state space that grows exponentially with the number of modes concatenated in the sequence. The same exponential growth rate applies when one models human-made systems with multiple sub-modules that each have multiple behavior modes (normal/abnormal). Allowing the Markov model to get extremely large is computationally inefficient for analysis and control.

Prior work studying model reduction of Markov jump models does not consider reduction of discrete state space, i.e. (reduction of number of Markovian modes), and prior work in state space reduction of Markov chain does not further consider Markov jump models. There have been several works studying the aggregation of states for Markov chains, which mainly relies on assumptions such as strong/weak lumpability, or aggregatibility properties of a Markov chain \cite{sanders2017clustering, ganguly2014markov, deng2009information, zhang2018state}. There is therefore significant potential in applying the abundant algorithms and theory in Markov chain aggregation to Markov jump systems. This can achieve model reduction from a new perspective and will benefit the analysis and control of, especially large, systems.

The work presented here addresses this gap. We observe that often times certain modes have similar transition behaviors, and these correlations between the modes can be exploited to construct a reduced-order model. By doing so, one may gain more insight into the nature of the complex model. Moreover, we will have fewer parameters to estimate or fewer control variables to design when learning and controlling the model, thus this may significantly reduce the computation burden.
We are interested in situations where the bottleneck is due to a large discrete state-space (i.e., large number of modes) and aim to cluster and aggregate the modes for reduction. We achieve this model aggregation by clustering the modes with similar transition distributions together. We assume the dynamics for each mode are known, but we have no knowledge of the true mode sequence. In our approach, we cluster based on a reduced-dimension representation of the empirical Markov transition matrix. We then re-estimate the empirical Markov matrix using this cluster information, giving us a final low-rank estimate. We discuss our method's computational advantage, and we show our approach has guaranteed performance in the sense that the clustering error and difference between reduced model and the true model can be upper bounded. Experiments show the efficacy of our approach as well as how the performance scales with the problem complexity.

\subsection{Prior Work}Previous work on Markov jump systems includes: analysis of stability and stabilization \cite{zhang2009stability}, analysis of system with time delays \cite{zhang2008analysis}, optimal control \cite{costa1995discrete}, robust control \cite{dong2008robust}, $\mathcal{H}_\infty$ filtering \cite{gonccalves2009cal}, etc. In the context of model reduction, prior work mainly focuses on the reduction of continuous state-space (or observation space): \cite{zhang2003h} studies the $\mathcal{H}_\infty$ model reduction and derives conditions under which a reduced order system can be obtained via linear matrix inequalities; \cite{kotsalis2006model} reduces the model order with the help of generalized dissipation inequalities and storage functions; \cite{kotsalis2010balanced} proposes a balanced truncation algorithm to reduce model order and gives upper bound on approximation error. While, to the best of our knowledge, the reduction of discrete state-space (number of modes) for Markov jump systems has not been considered before. 

\section{Problem Formulation} \label{sec_prelim}
\subsection{Notation}
In this paper, boldface and uppercase (lowercase) letters denote matrices (vectors); plain letters denote scalars. If $\vA$ is a matrix, then $\vA(i,j)$ indexes the $(i,j)$th element in $\vA$ and $\vA(i,j{:}k)$ indexes the row vector corresponding to the $i$th row and column $j$ through $k$. $\vA(i,:)$ indexes the $i$th row of $\vA$. Norms without subscript, i.e. $\norm{\cdot}$, all denote the $\ell_2$-norm. We let $[n]:=\curlybrackets{1,2,\dots,n}$ and $X_{0:N}:=\curlybrackets{X_i}_{i=0}^N$.

For Markov chain with state space $[n]$ and row stochastic transition matrix $\vP \in \dm{n}{n}$, we let $\vpi \in \dm{n}{1}$ denote the stationary distribution vector of $\vP$, i.e. $\vpi^\T \vP = \vpi^\T$. Furthermore, we let $\pi_{\max}:=\max_i \vpi_i, \pi_{\min}:=\min_i \vpi_i$. If $\vP$ is ergodic, then $\vpi$ is unique and $\pi_{\min}>0$. Let $\vpi_t \in \dm{n}{1}$ denote the transient state distribution of $\vP$ and $\vpi_t^\T = \vpi_{t-1}^\T \vP $. We denote with $\curlybrackets{\Omega_1, \dots, \Omega_r}$ a partition of the state space $[n]$, where each $\Omega_k$ denotes a cluster of states. We let $\Omega_{(i)}$ denote the cluster with $i$th largest cardinality.

\subsection{Preliminaries} \label{subsec_1}
The Markov switched model we consider has the following form:
\begin{gather}
	y_t = \sum_{i=1}^{n_a} a_i(X_t) y_{t-i} + \sum_{j=1}^{n_c} c_j(X_t) u_{t-j} + n_t , \\
	X_{0:N} \in [n]^{N+1} \sim \text{Markov chain} (\mathbf{P}),
\end{gather}
where $y_t, u_t, n_t$ are scalars and represent the model output, input and noise at time $t$ respectively. And $y_t$ depends on $\curlybrackets{y_{t-i}}_{i=1}^{n_a}, \curlybrackets{u_{t-j}}_{j=1}^{n_c} $ linearly through the parameters $\curlybrackets{a_i(X_t)}_{i=1}^{n_a}, \curlybrackets{c_j(X_t)}_{j=1}^{n_c}$ from mode $X_t$ at time $t$. There are $n$ modes in total and the mode sequence $X_{0:N}$ is assumed to follow a Markov chain with row stochastic Markov matrix $\vP \in \dm{n}{n}$. The initial state distribution $\vpi_0$ can be arbitrary. Note that one can omit input $u_t$ by taking $n_c=0$, which corresponds to an autonomous model. If we let 
\begin{gather}
\vw_{X_t} := [a_1(X_t), \dots, a_{n_a}(X_t), c_1(X_t), \dots, c_{n_c}(X_t)]^\T , \\
\vphi_t := [y_{t-1}, \dots, y_{t-n_a}, u_{t-1}, \dots, u_{t-n_c}]^\T,
\end{gather}
then we obtain a simpler representation of the model:
\begin{equation}\label{eq_48}
	y_t = \vw_{X_t}^\T \vphi_t + n_t, \\
\end{equation}
where the pair $\curlybrackets{y_t, \vphi_t}$ can be viewed as the observation/data.

Furthermore, we assume the Markov matrix $\vP$ has the following structure:
\begin{equation}
	\vP = \bar{\vP} + \vDelta,
\end{equation}
where $\bar{\vP}$ is a Markov matrix that is \textit{$r$-aggregatable}, i.e. there exists an $r$-cluster partition $\curlybrackets{\Omega_1, \Omega_2, \dots, \Omega_r}$ on the state space $[n]$ such that 
\begin{equation}
	\forall k\in[r], \forall i,j \in \Omega_k, \bar{\vP}(i,:)=\bar{\vP}(j,:).
\end{equation}
We assume $rank(\bar{\vP})=r$, which guarantees there are only $r$ unique rows in $\bar{\vP}$. Matrix $\vDelta$ is the perturbation that accounts for the difference of the true Markov matrix $\vP$ and the $r$-aggregatable Markov matrix $\bar{\vP}$. Note that so far we only assume modes are clustered based on their similarities in transition distributions and for future work we will take the mode dynamics and group connectivity into account.

\subsection{Problem Formulation}
Assuming parameters for all the modes $\curlybrackets{\vw_k}_{k=1}^n$ are known, given observation trajectory $\curlybrackets{y_t,u_t}_{t=0}^N$ with length $N$, we want to find an $r$-aggregatable approximation $\wtd{\vP}$ of $\vP$ such that the partition information in $\wtd{\vP}$ could recover $\curlybrackets{\Omega_1, \Omega_2, \dots, \Omega_r}$ in $\bar{\vP}$.


We seek an $r$-aggregatable approximation of the original Markov matrix while preserving the clustering information in the underlying aggregatable Markov matrix. Given a Markov chain, one could use the power method \cite{Stewart1978} to iteratively simulate the evolution of the state distribution or compute the stationary distribution. So, one motivation to solve the aforementioned problem is that, during the power method, it requires $O(n^2)$ scalar multiplications in one iteration for $\vP$ but only $O(rn)$ for the $r$-aggregatable $\wtd{\vP}$. Meanwhile, the compromise in accuracy brought by the reduction of computation can be upper bounded with the following theorem.

\begin{theorem}\label{thrm_MCDiff}
	The differences between two Markov matrices $\vP$ and $\wtd{\vP}$ in terms of stationary distribution satisfy
	\begin{equation} \label{eq_40}
	\norm{\vpi - \tilde{\vpi}}_1 \leq \sum_{i=2}^{n} \frac{1}{1-\lambda_i(\vP)} \norm{\vP-\wtd{\vP}}_\infty.
	\end{equation}
	Furthermore, if $\vP$ and $\wtd{\vP}$ are both ergodic, their transient distributions and satisfy
	\begin{equation} \label{eq_41}
	\norm{\vpi_t - \tilde{\vpi}_t}_1 \leq C \rho^t + \norm{\vpi - \tilde{\vpi}}_1
	\end{equation}
	for some $C>0$ and $0<\rho<1$.
\end{theorem}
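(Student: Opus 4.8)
The plan is to read \eqref{eq_40} as a Markov-chain perturbation bound and \eqref{eq_41} as a consequence of geometric ergodicity. Throughout write $\vE := \wtd{\vP} - \vP$ for the perturbation. Two elementary facts drive everything: since both matrices are row stochastic, every row of $\vE$ sums to zero, i.e.\ $\vE\onevec = \vnot$; and since $\vpi,\tilde{\vpi}$ are probability vectors, $(\tilde{\vpi}-\vpi)^\T\onevec = 0$ and $\norm{\tilde{\vpi}}_1 = 1$. I will also use the \emph{group inverse} $(\vI-\vP)^\#$, which is well defined because ergodicity of $\vP$ makes the Perron eigenvalue $\lambda_1(\vP)=1$ simple, and which satisfies the projection identity $(\vI-\vP)(\vI-\vP)^\# = \vI - \onevec\vpi^\T$.

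First I would derive an exact expression for the stationary-distribution error. Starting from $\tilde{\vpi}^\T\wtd{\vP} = \tilde{\vpi}^\T$ and using $\vpi^\T(\vI-\vP) = \vnot^\T$, the perturbation can be isolated as
\begin{equation}
(\tilde{\vpi}-\vpi)^\T(\vI-\vP) = \tilde{\vpi}^\T\vE .
\end{equation}
Right-multiplying by $(\vI-\vP)^\#$, invoking the projection identity, and cancelling the $\onevec\vpi^\T$ term with $(\tilde{\vpi}-\vpi)^\T\onevec = 0$ gives the key identity
\begin{equation}\label{eq_pertid}
(\tilde{\vpi}-\vpi)^\T = \tilde{\vpi}^\T\vE\,(\vI-\vP)^\# .
\end{equation}
The coefficient $\sum_{i=2}^n 1/(1-\lambda_i(\vP))$ is then read off from the spectrum of the group inverse: using the eigendecomposition of $\vP$ (with $\lambda_1=1$ and left/right eigenvectors $\vpi$ and $\onevec$), one has $(\vI-\vP)^\# = \sum_{i=2}^n \frac{1}{1-\lambda_i(\vP)}\vv_i\vu_i^\T$, so that $\tr\big((\vI-\vP)^\#\big) = \sum_{i=2}^n \frac{1}{1-\lambda_i(\vP)}$. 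Taking $\ell_1$ norms in \eqref{eq_pertid} and using submultiplicativity together with $\norm{\tilde{\vpi}^\T\vE}_1 \le \norm{\vE}_\infty\norm{\tilde{\vpi}}_1 = \norm{\vP-\wtd{\vP}}_\infty$ reduces \eqref{eq_40} to bounding the norm of $(\vI-\vP)^\#$ by its trace.

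For the transient bound \eqref{eq_41} I would insert the two stationary distributions and apply the triangle inequality,
\begin{equation}
\norm{\vpi_t-\tilde{\vpi}_t}_1 \le \norm{\vpi_t-\vpi}_1 + \norm{\vpi-\tilde{\vpi}}_1 + \norm{\tilde{\vpi}-\tilde{\vpi}_t}_1 ,
\end{equation}
and then bound the first and third terms by geometric ergodicity: because $\vP$ and $\wtd{\vP}$ are ergodic, each transient distribution converges to its stationary distribution at a geometric rate governed by the respective subdominant eigenvalue modulus, so that $\norm{\vpi_t-\vpi}_1 + \norm{\tilde{\vpi}-\tilde{\vpi}_t}_1 \le C\rho^t$ with $\rho$ the larger of the two moduli and $C$ absorbing the initial-distribution and eigenvector constants. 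The middle term is exactly $\norm{\vpi-\tilde{\vpi}}_1$, which yields \eqref{eq_41}.

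The main obstacle is the passage from the operator norm of $(\vI-\vP)^\#$ to the scalar $\sum_{i=2}^n 1/(1-\lambda_i(\vP))$. The inequality $\norm{(\vI-\vP)^\#}_\infty \le \tr\big((\vI-\vP)^\#\big)$ does not hold for arbitrary matrices, so this step needs genuine care: I would instead estimate \eqref{eq_pertid} termwise through the spectral sum, which forces attention to the normalization of the eigenvectors $\vu_i,\vv_i$ (the constraint $\vu_i^\T\vv_i=1$ forces $\norm{\vu_i}_1\norm{\vv_i}_\infty\ge 1$ by H\"older, i.e.\ the ``wrong'' direction) and to the possibility of complex or repeated eigenvalues. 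Obtaining the stated real coefficient therefore appears to require either a reversibility-type assumption rendering the spectrum real with $\lambda_i<1$, or a careful geometric-series estimate of $(\vI-\vP)^\# = \sum_{k\ge 0}(\vP^k-\onevec\vpi^\T)$ that preserves the per-mode decay; ergodicity, which keeps every $1-\lambda_i(\vP)$ bounded away from zero, is what guarantees each term is finite.
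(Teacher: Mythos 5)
Your reduction of \eqref{eq_40} to the group-inverse identity $(\tilde{\vpi}-\vpi)^\T = \tilde{\vpi}^\T(\wtd{\vP}-\vP)(\vI-\vP)^\#$ is correct, and it is in fact the machinery behind the result the paper invokes: the paper gives no self-contained proof of \eqref{eq_40}, but simply cites Section 3.6 of Cho and Meyer's comparison survey, which records Meyer's eigenvalue-based perturbation bound. Your argument for \eqref{eq_41} is exactly the paper's: insert the two stationary distributions, apply the triangle inequality, and invoke geometric convergence of each ergodic chain to its stationary distribution (the Markov convergence theorem), which is sound.

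The obstacle you flag at the end is, however, a genuine gap, and it sits at the crux of \eqref{eq_40}: you never establish that applying $(\vI-\vP)^\#$ costs at most the factor $\sum_{i=2}^{n}\frac{1}{1-\lambda_i(\vP)}$, and neither escape route you sketch closes it. The trace heuristic $\norm{(\vI-\vP)^\#}\leq \tr\parenthesesbig{(\vI-\vP)^\#}$ fails for non-normal matrices, and adding a reversibility hypothesis would prove a strictly weaker theorem than the one stated. The missing ingredient is to exploit that the vector $\vv^\T := \tilde{\vpi}^\T(\wtd{\vP}-\vP)$ satisfies $\vv^\T\onevec = 0$, so that one only needs the restriction of $(\vI-\vP)^\#$ to $\onevec^{\perp}$, i.e.\ the ergodicity coefficient
\begin{equation*}
\tau_1\parenthesesbig{(\vI-\vP)^\#} \;=\; \sup\curlybracketsbig{\norm{\vv^\T(\vI-\vP)^\#}_1 \;:\; \norm{\vv}_1\leq 1,\ \vv^\T\onevec = 0},
\end{equation*}
and the inequality $\tau_1\parenthesesbig{(\vI-\vP)^\#} \leq \sum_{i=2}^{n}\frac{1}{1-\lambda_i(\vP)}$ is precisely the content of the cited result (Meyer's bound, as compared in Cho--Meyer). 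This is not a one-line consequence of the spectral decomposition, for exactly the eigenvector-normalization reason you identify (and note the sum is still real for complex conjugate eigenvalue pairs, so the statement is not restricted to reversible chains). In short: your proposal reconstructs the right skeleton, proves \eqref{eq_41}, and honestly isolates the hard step, but as written it only reduces \eqref{eq_40} to an unproven and nontrivial spectral inequality --- the very step the paper outsources to the literature.
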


We can see that as long as the approximation error $\norm{\vP-\wtd{\vP}}_\infty$ is upper bounded, the stationary and transient behavior differences between the true Markov matrix $\vP$ and the $r$-aggregatable approximation $\wtd{\vP}$ can be bounded. This gives the justification for using $\wtd{\vP}$ as a surrogate for $\vP$ in the power method. The distance $\norm{\vP-\wtd{\vP}}_\infty$ with $\wtd{\vP}$ obtained from our approach is bounded in Theorem \ref{thrm_main_P}.

\section{Our Approach}\label{sec_approach}
Our approach to solve the problem mentioned above is given in Algorithm \ref{Alg_1}.
{\SetAlgoNoLine%
\begin{algorithm}[h!] 
	\KwIn{Observation $\curlybrackets{y_t,u_t}_{t=0}^N$, dynamics $\curlybrackets{\vw_k}_{k=1}^n$}
	\For{$t = 0, \dots, N$}{
		$\vphi_t := [y_{t-1}, \dots, y_{t-n_a}, u_{t-1}, \dots, u_{t-n_c}]^\T $\\
		$\wht{X}_t = \underset{k \in [n]}{\arg \min} |y_t - \vw_k^\T \vphi_t|$ \label{algline_1}\\
	}
	Compute empirical Markov matrix:
	\begin{equation}
	\wht{\vP}(i,j) = \frac{\sum_{t=1}^{N} \indicator \curlybrackets{\wht{X}_{t-1}=i, \wht{X}_t=j}}{\sum_{t=1}^{N} \indicator \curlybrackets{\wht{X}_{t-1}=i}}
	\end{equation} \label{algline_2}\\
	SVD decomposition: $\wht{\vP} = \vU \vSigma \vV^\T$\\
	$\vU_r = \vU(:, 1{:}r)$ \label{algline_3}\\
	Solve the following k-means problem:
	\begin{equation} \label{eq_31}
	\hspace{0em}
	\wht{\Omega}_{1:r}, \hat{\vc}_{1:r} = 
	\underset{\substack{\wht{\Omega}_1, \dots, \wht{\Omega}_r \\ \hat{\vc}_1,\dots, \hat{\vc}_r }}{\arg \min}		
	\sum_{k = 1}^r \sum_{i \in \wht{\Omega}_k} 
	\norm{\vU_r(i,:) - \hat{\vc}_k}^2
	\end{equation} \label{algline_4} \\
	Aggregatable approximation: assume $i \in \wht{\Omega}_s$	
	\begin{equation}
		\wtd{\vP}(i,j) = \frac{\sum_{k\in\wht{\Omega}_s} \sum_{t=1}^{N} \indicator \curlybrackets{\wht{X}_{t-1}=k, \wht{X}_t=j}}{\sum_{k\in\wht{\Omega}_s} \sum_{t=1}^{N} \indicator \curlybrackets{\wht{X}_{t-1}=k}}
	\end{equation} \label{algline_5}\\
	\KwOut{Partition $\curlybrackets{\wht{\Omega}_1, \dots, \wht{\Omega}_r}$ and matrix $\wtd{\vP}$}
	\caption{Mode Clustering for Markov Jump Model} \label{Alg_1}
\end{algorithm}}
In Line \ref{algline_1}, we estimate the active mode at time $t$ by picking the mode whose dynamics gives the smallest residual error $|y_t - \vw_k^\T \vphi_t|$. Then, in Line \ref{algline_2}, based on the estimated mode sequence, we estimate $\vP$ with the empirical Markov matrix $\wht{\vP}$ in which the transition probability from mode $i$ to mode $j$ is estimated with the frequency of transition pair $(i,j)$ with respect to mode $i$. In Line \ref{algline_3}, we take the SVD of $\wht{\vP}$ and preserve the first $r$ singular value components. This is essentially a denoising step that reduces the influence of perturbation $\vDelta$ and estimation error in $\wht{\vP}$, and the obtained $\vU_r$ is a dimension-reduced representation of $\wht{\vP}$ that bears the low-rank structure in $\bar{\vP}$. Then, we use k-means to estimate the clustering information in $\bar{\vP}$. Finally, in Line \ref{algline_5}, we compute $\wtd{\vP}$ by taking modes within the same estimated cluster as a single mode and re-computing the empirical Markov matrix.

Note that if a certain mode does not show up at all in the trajectory, i.e. the denominators in Line \ref{algline_2} and Line \ref{algline_5} might be 0, then we simply assign uniform distribution to that mode, i.e. $\wht{\vP}(i,j) = 1/n$. We show in the proof that when the trajectory is long enough, every mode will show up with high probability.

\section{Theoretical Guarantees}\label{sec_theory}
\subsection{Relevant Definitions}
Before discussing theoretical guarantees of the proposed approach, we introduce some definitions that will be used later.

\begin{definition}[Mixing Time of MC] \label{def_mixingtime}
	Let $\vP \in \dm{n}{n}$ be a row stochastic Markov transition matrix with stationary distribution $\vpi$. Then for all $\epsilon > 0$, the $\epsilon-$mixing time is defined as
	\begin{equation}
	\tau(\epsilon) = \min \curlybracketsbig{k: \max_{i \in [n]} \frac{1}{2} \norm{(\vP^k)(i,:)^\T - \vpi }_1 \leq \epsilon }.
	\end{equation}
	Moreover, we let $\tau_* = \tau(\frac{1}{4})$.
\end{definition}

Since k-means is used in Algorithm \ref{Alg_1}, we assume a $(1+\epsilon)$ solution to the k-means problem can be obtained and later show how $\epsilon$ affects the overall clustering error.
\begin{definition}[Approximate Solution to k-means Clustering Problem]
	For problem in \eqref{eq_31}, we say $\wht{\Omega}_1, \dots, \wht{\Omega}_r, \hat{\vc}_1,\dots, \hat{\vc}_r$ is a $(1+\epsilon)$ solution if
	\begin{equation}
	\sum_{s = 1}^r \sum_{i \in \wht{\Omega}_s} \norm{\vU_r(i,:) - \hat{\vc}_s}^2
	\leq
	(1+\epsilon)
	\underset{\substack{\Omega_1, \dots, \Omega_r \\ \vc_1,\dots, \vc_r }}{\min}		
	\sum_{s = 1}^r \sum_{i \in \Omega_s} 
	\norm{\vU_r(i,:) - \vc_s}^2.
	\end{equation}
\end{definition}

\begin{definition}[Misclustering Rate]\label{def_MisclusteringRate}
	Let $\curlybrackets{\Omega_1, \Omega_2, \dots, \Omega_r}$ be the underlying true clustering partition of $[n]$ and $\curlybrackets{\wht{\Omega}_1, \wht{\Omega}_2, {\dots}, \wht{\Omega}_r}$ be an estimate of the true partition. We define \emph{misclustering rate} of $\curlybrackets{\wht{\Omega}_1, \wht{\Omega}_2, {\dots}, \wht{\Omega}_r}$ as
	\begin{equation} \label{eq_9}
	MR(\wht{\Omega}_1, \wht{\Omega}_2, \dots, \wht{\Omega}_r) = \min_{k \in \mathcal{K}} \sum_{j=1}^{r} \frac{|\curlybrackets{i: i \in \Omega_j; i \notin \wht{\Omega}_{k(j)}}|}{|\Omega_j|},
	\end{equation}
	where $\mathcal{K}$ is the set of all bijections from $[r]$ to $[r]$.
\end{definition}

Since the partition is invariant to the labels of clusters, when we evaluate the misclustering rate, we compute the error under the best label matching, which is the reason we need $\mathcal{K}$. Note that in \eqref{eq_9}, each summand has numerator no larger than the its denominator, so $M(\wht{\Omega}_1, \wht{\Omega}_2, \dots, \wht{\Omega}_r) \leq r$ trivially.

\subsection{Main Results}

Let $N' := \sum_{t=0}^{N-1} \indicator \curlybrackets{\wht{X}_t \neq X_t}$ denote the number of mistakes in the estimated mode sequence and $\eta:=\frac{N'}{N}$ denote the mistake rate. In the following analyses, Lemma \ref{lemma_NoMistakeCondition} gives conditions under which $N'=0$. Theorem \ref{thrm_main_MR} and Theorem \ref{thrm_main_P} give the upper bounds on misclustering rate and approximation error.

\begin{lemma}\label{lemma_NoMistakeCondition}
	Assume for all $t, |n_t|<n_{\max}$ and for all $ j \in [n]\backslash X_t$,
	\begin{equation} \label{eq_49}
		|\vphi_t^\T (\vw_{X_t} - \vw_j)| > 2 n_{\max} ,
	\end{equation}
	then the sequence estimated in Line \ref{algline_1} of Algorithm \ref{Alg_1} is correct, i.e. $N'=0$.
\end{lemma}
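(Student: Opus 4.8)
The plan is to show that under the stated separation condition, the mode that minimizes the residual $|y_t - \vw_k^\T \vphi_t|$ in Line \ref{algline_1} is always the true active mode $X_t$, so that no estimation mistake is ever made and hence $N'=0$. The key observation is that the true mode achieves a small residual (bounded by the noise level), whereas every competing mode is forced to have a strictly larger residual by the margin assumption \eqref{eq_49}. I would argue pointwise in $t$ and then conclude for the whole trajectory.

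First I would compute the residual at the true mode. Using the model \eqref{eq_48}, $y_t = \vw_{X_t}^\T \vphi_t + n_t$, so the residual evaluated at $k = X_t$ is simply $|y_t - \vw_{X_t}^\T \vphi_t| = |n_t| < n_{\max}$, using the noise bound. Next I would lower-bound the residual at any wrong mode $j \neq X_t$. Substituting the model into the residual and applying the reverse triangle inequality,
\begin{equation}
|y_t - \vw_j^\T \vphi_t| = |\vphi_t^\T(\vw_{X_t} - \vw_j) + n_t| \geq |\vphi_t^\T(\vw_{X_t} - \vw_j)| - |n_t|.
\end{equation}
Invoking the separation hypothesis \eqref{eq_49} and the noise bound then gives $|y_t - \vw_j^\T \vphi_t| > 2 n_{\max} - n_{\max} = n_{\max}$ for every $j \neq X_t$.

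Combining the two estimates, for every wrong mode $j$ the residual strictly exceeds $n_{\max}$, while the true mode attains a residual strictly below $n_{\max}$; therefore $X_t$ is the unique minimizer in Line \ref{algline_1}, giving $\wht{X}_t = X_t$. Since this holds for each $t$, we get $\indicator\curlybrackets{\wht{X}_t \neq X_t}=0$ for all $t$ and thus $N' = \sum_t \indicator\curlybrackets{\wht{X}_t \neq X_t} = 0$.

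I do not anticipate a genuine obstacle here: the argument is an elementary triangle-inequality separation bound, and the margin $2n_{\max}$ in \eqref{eq_49} is evidently chosen precisely so that the two residual bounds cross at $n_{\max}$. The only point requiring mild care is the strictness of the inequalities — one must ensure the minimizer is unique rather than merely tied — but this follows directly from the strict inequalities in both the noise bound and the separation assumption, so the tie-breaking never becomes an issue.
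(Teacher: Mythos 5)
Your proof is correct and takes essentially the same approach as the paper: the paper likewise reduces correctness of Line \ref{algline_1} to the inequality $|n_t| < |(\vw_{X_t}-\vw_j)^\T \vphi_t + n_t|$ and observes that \eqref{eq_49} together with $|n_t|<n_{\max}$ suffices via the reverse triangle inequality. Your version merely spells out the two residual bounds (true mode below $n_{\max}$, wrong modes above $n_{\max}$) that the paper leaves implicit in its sufficiency claim.
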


When $n_t=0$, the dynamics given in \eqref{eq_48} defines a hyperplane plus noise. Data points at the intersection of these hyperplanes (a set of measure zero in the noiseless case) are not useful in distinguishing the mode. \eqref{eq_49} essentially means that such data points do not exist.


\begin{theorem} \label{thrm_main_MR}
	Assume: (i) the framework in Section \ref{subsec_1} holds; (ii) $\vP$ is ergodic; (iii) $\curlybrackets{\wht{\Omega}_1, \dots, \wht{\Omega}_r}$ is a  $(1+\epsilon_1)$ solution to the k-means problem; (iv) $	\norm{\vDelta} \leq \frac{\sigma_r(\bar{\vP})}{8 \sqrt{(2+\epsilon_1) r}} \sqrt{\frac{|\Omega_{(r)}|}{ |\Omega_{(1)}|}  + 1}$; (v) mistake rate $\eta<\frac{\pi_{\min}}{2}$. 
	Then for all $\epsilon_2 >0$, let $\tilde{\epsilon}_2 = \min \curlybracketsbig{\epsilon_2, \frac{\pi_{\min}}{2} - \eta, \frac{\pi_{\min}}{4 (\sigma_1(\bar{\vP}) + \norm{\vDelta})} \parenthesesbig{\frac{\sigma_r(\bar{\vP})}{8 \sqrt{(2+\epsilon_1) r}} \sqrt{\frac{|\Omega_{(r)}|}{ |\Omega_{(1)}|}  + 1} - \norm{\vDelta}} }$,
	if $N \geq 200 \tau_* \pi_{\max} \log (\tilde{\epsilon}_2^{-1}) \tilde{\epsilon}_2^{-2} \squarebrackets{\log(24 n \tau_*) + \log (\log (\tilde{\epsilon_2}^{-1}))}$, with probability no less than
	\begin{equation}
		1 - \exp \parenthesesbig{- \frac{N}{200 \tau_* \pi_{\max} \log(\tilde{\epsilon}_2^{-1}) \tilde{\epsilon}_2^{-2} }} ,
	\end{equation}
	we have
	\begin{equation}\label{eq_30}
		MR(\wht{\Omega}_1, \wht{\Omega}_2, \dots, \wht{\Omega}_r) \\
		\leq
		64(2+\epsilon_1) r \parenthesesbig{\frac{\norm{\vDelta}}{\sigma_r(\bar{\vP})} + 
		\frac{4 (\epsilon_2+1.5\eta) (\norm{\vDelta} + \norm{\bar{\vP}})}{\pi_{\min} \sigma_r(\bar{\vP})}}^2 .
	\end{equation}
\end{theorem}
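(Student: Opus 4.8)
The plan is to split the argument into three stages that mirror Lines~\ref{algline_2}--\ref{algline_4} of Algorithm~\ref{Alg_1}. Writing $\vE := \wht{\vP} - \vP$ for the estimation error of the empirical matrix, the organizing identity is the triangle inequality $\norm{\wht{\vP} - \bar{\vP}} \le \norm{\vE} + \norm{\vDelta}$, valid because $\vP = \bar{\vP} + \vDelta$. Stage one bounds $\norm{\vE}$ with high probability; stage two feeds this into a singular-subspace perturbation bound comparing the empirical left singular vectors $\vU_r$ with the top-$r$ left singular vectors $\bar{\vU}_r$ of the clean matrix $\bar{\vP}$; stage three converts subspace closeness into a bound on $MR$ for the $(1+\epsilon_1)$-approximate k-means solution. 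The two summands inside the square of \eqref{eq_30} are exactly the two terms of $\norm{\wht{\vP} - \bar{\vP}}/\sigma_r(\bar{\vP})$.

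The first and hardest stage controls $\norm{\vE}$, which is delicate because $\wht{\vP}$ is built from the \emph{estimated} sequence $\wht{X}_{0:N}$ and from dependent Markov samples. I would split $\vE$ into a statistical part and a mislabeling part. Feeding the true sequence $X_{0:N}$ through Line~\ref{algline_2} yields row-wise empirical transition frequencies; using a Markov-chain concentration inequality expressed through the mixing time $\tau_*$ and the stationary distribution (here ergodicity, assumption~(ii), guarantees $\pi_{\min}>0$), the visit counts $\sum_t \indicator\curlybrackets{X_{t-1}=i}$ concentrate near $N\vpi_i \ge N\pi_{\min}$ and each per-row total-variation error drops below $\tilde{\epsilon}_2$ once $N$ exceeds the stated threshold, with the claimed exponential probability; this is where the hypotheses on $N$, $\tau_*$, $\pi_{\max}$ and the definition of $\tilde{\epsilon}_2$ are consumed. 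The mislabeling part, incurred by replacing $X_t$ with $\wht{X}_t$ on an $\eta$-fraction of the times, perturbs both numerators and denominators in Line~\ref{algline_2}; bounding its effect on each normalized row, together with $\eta<\pi_{\min}/2$ and $\tilde{\epsilon}_2\le \pi_{\min}/2-\eta$ keeping the denominators away from zero (so no mode receives the fallback uniform row), produces the combined factor $(\epsilon_2+1.5\eta)$ and the $\pi_{\min}^{-1}$ normalization, while carrying the magnitude of the true rows through yields the $(\norm{\vDelta}+\norm{\bar{\vP}})$ scale. Thus $\norm{\vE} \lesssim \tfrac{4(\epsilon_2+1.5\eta)(\norm{\vDelta}+\norm{\bar{\vP}})}{\pi_{\min}}$.

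For the second stage I would first record the block structure of $\bar{\vU}_r$. Since $\bar{\vP}$ is rank $r$ and $r$-aggregatable, it factors as $\bar{\vP}=\vZ\vB$ with $\vZ\in\curlybrackets{0,1}^{n\x r}$ the cluster-membership matrix, so its column space is spanned by the orthogonal cluster-indicator columns of $\vZ$ and hence $\bar{\vU}_r=\vZ\vX$ for some $\vX\in\dm{r}{r}$. Orthonormality of $\bar{\vU}_r$ forces $\vX^\T\diag(|\Omega_1|,\dots,|\Omega_r|)\vX=\vI$, so the rows of $\vX$ are orthogonal with $\norm{\vX(k,:)}=|\Omega_k|^{-1/2}$; consequently every row of $\bar{\vU}_r$ equals one of $r$ distinct centroids, and distinct centroids satisfy $\norm{\vX(k,:)-\vX(l,:)}^2=|\Omega_k|^{-1}+|\Omega_l|^{-1}$. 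Because $\sigma_{r+1}(\bar{\vP})=0$, a Davis--Kahan/Wedin bound then controls the principal angles between the subspaces by $\norm{\wht{\vP}-\bar{\vP}}/\sigma_r(\bar{\vP})$; converting to a Frobenius distance $\norm{\vU_r\vO-\bar{\vU}_r}_F$ between suitably rotated singular-vector matrices (for an orthogonal $\vO$) introduces only constants and the factor $r$ visible in \eqref{eq_30}, and the triangle inequality above injects both $\norm{\vE}$ and $\norm{\vDelta}$.

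The final stage applies the standard spectral-clustering misclassification lemma (in the style of Lei--Rinaldo) to the rows of $\vU_r$: using the centroid separations just computed, the number of nodes misclustered out of $\Omega_j$ is at most a constant multiple of $(2+\epsilon_1)\norm{\vU_r\vO-\bar{\vU}_r}_F^2$ divided by the squared separation $\ge|\Omega_j|^{-1}$, so dividing by $|\Omega_j|$ and summing over the $r$ clusters yields \eqref{eq_30} after substituting the stage-two bound. Assumption~(iv) is precisely the statement that $\norm{\vDelta}$ (hence $\norm{\wht{\vP}-\bar{\vP}}$, once stage one has made $\norm{\vE}$ comparably small through $\tilde{\epsilon}_2$) lies below the separation threshold that keeps the misclustering count below the smallest cluster size $|\Omega_{(r)}|$, which is exactly what makes the lemma applicable; rewriting $|\Omega_{(r)}|(|\Omega_{(1)}|^{-1}+|\Omega_{(r)}|^{-1})=|\Omega_{(r)}|/|\Omega_{(1)}|+1$ recovers the factor $\sqrt{|\Omega_{(r)}|/|\Omega_{(1)}|+1}$ appearing there. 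I expect stage one to be the main obstacle, since it must marry dependent-sample Markov concentration with the sequence-estimation errors and the row-normalization; stages two and three are comparatively mechanical once the clean block structure of $\bar{\vU}_r$ is in hand.
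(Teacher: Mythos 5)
Your proposal is correct and follows essentially the same route as the paper's proof: you bound $\norm{\wht{\vP}-\vP}$ via mixing-time-based Markov concentration with an $\eta$-fraction of mislabeled states (the paper's Lemma \ref{lemma_ConcentrationofMC_withMistake}, yielding exactly the $4\pi_{\min}^{-1}\norm{\vP}(\epsilon_2+1.5\eta)$ factor), fold it into the perturbation $\wht{\vDelta}=\vDelta+(\wht{\vP}-\vP)$ of the aggregatable matrix $\bar{\vP}$, and then run the identical spectral pipeline---block structure of $\bar{\vU}_r$ with centroid separations $\sqrt{|\Omega_k|^{-1}+|\Omega_l|^{-1}}$ (Lemma \ref{lemma_3}), Wedin/Weyl with $\sigma_{r+1}(\bar{\vP})=0$ plus orthogonal alignment (Lemmas \ref{lemma_AdvancedWedinBound} and \ref{lemma_subspacedistance}), and the Lei--Rinaldo approximate k-means bound (Lemma \ref{lemma_kmeans})---with assumption (iv) and the third term of $\tilde{\epsilon}_2$ consumed exactly where the paper consumes them. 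The only cosmetic difference is that the paper factors the spectral stages through an intermediate clean-matrix lemma (Lemma \ref{lemma_ClusteronTrueMatrix}) before perturbing, whereas you perturb directly; the substance is the same.
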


In Theorem \ref{thrm_main_MR}, the ergodicity condition on Markov matrix $\vP$ and mistake rate $\eta \leq \frac{\pi_{\min}}{2}$ guarantees that $\vP$ can be well learned from a single trajectory. When $\epsilon_2$ is small enough, $\tilde{\epsilon}_2$ becomes $\epsilon_2$, which will be more interpretable for the probability and trajectory length lower bounds. Through some further inspection of Theorem \ref{thrm_main_MR}, we could see the bounds improve as any of the following decreases: number of modes $n$, number of clusters $r$, perturbation $\norm{\vDelta}$, mixing time $\tau_*$, condition number $\sigma_1(\bar{\vP}) / \sigma_r(\bar{\vP})$, and disparities in stationary distribution $\vpi$ and cluster population, namely $\pi_{\max} / \pi_{\min}$ and $|\Omega_{(1)}| / |\Omega_{(r)}|$. The disparities play a role here because as disparities increases, certain modes or clusters may be dominated by the others and become less likely to show up in the data. This will make them less learned in the algorithm and the estimation and clustering error will increase accordingly.

\begin{theorem}\label{thrm_main_P}
	Under the same conditions as Theorem \ref{thrm_main_MR}, if $MR=0$, then with the same probability lower bound we could have
	\begin{equation}
		\norm{\vP - \wtd{\vP}}_\infty \leq 12 \sqrt{n} \pi_{\min}^{-1} \sigma_1(\vP) (\epsilon_2 + 1.5 \eta) + 2 \norm{\vDelta}_\infty .
	\end{equation}
\end{theorem}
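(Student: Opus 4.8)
Here is how I would approach Theorem \ref{thrm_main_P}. The plan is to split the target quantity into a clustering/aggregation part and a pure estimation part, show that the hypothesis $MR=0$ collapses the aggregation part into a clean additive $2\norm{\vDelta}_\infty$ term, and then import the spectral concentration bound on $\wht{\vP}$ that is already needed in the proof of Theorem \ref{thrm_main_MR}.

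First I would exploit $MR=0$. Write $N_{ij} := \sum_{t=1}^N \indicator\curlybrackets{\wht{X}_{t-1}=i,\wht{X}_t=j}$ and $N_i := \sum_{j} N_{ij}$, so that $\wht{\vP}(i,j)=N_{ij}/N_i$. Since the misclustering rate is zero, after the (label-invariant) optimal matching the estimated partition coincides with the true one, and we may assume $\wht{\Omega}_s=\Omega_s$. Then Line \ref{algline_5} reads, for $i\in\Omega_s$,
\begin{equation}
\wtd{\vP}(i,j)=\frac{\sum_{k\in\Omega_s}N_k\,\wht{\vP}(k,j)}{\sum_{k\in\Omega_s}N_k},
\end{equation}
i.e. each row of $\wtd{\vP}$ inside a cluster is the $N_k$-weighted average of the corresponding rows of $\wht{\vP}$. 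Because $\bar{\vP}$ is $r$-aggregatable, $\bar{\vP}(k,j)$ is constant over $k\in\Omega_s$ and therefore equals the same weighted average; subtracting $\vP(i,j)=\bar{\vP}(i,j)+\vDelta(i,j)$ and using $\wht{\vP}(k,j)-\bar{\vP}(k,j)=(\wht{\vP}(k,j)-\vP(k,j))+\vDelta(k,j)$ yields
\begin{equation}
\wtd{\vP}(i,j)-\vP(i,j)=\frac{\sum_{k\in\Omega_s}N_k\bigl(\wht{\vP}(k,j)-\vP(k,j)\bigr)}{\sum_{k\in\Omega_s}N_k}+\frac{\sum_{k\in\Omega_s}N_k\,\vDelta(k,j)}{\sum_{k\in\Omega_s}N_k}-\vDelta(i,j).
\end{equation}
Summing $|\cdot|$ over $j$ and using that a convex combination of row-sums is at most the largest row-sum, the first term contributes at most $\norm{\wht{\vP}-\vP}_\infty$ and each of the two $\vDelta$ terms contributes at most $\norm{\vDelta}_\infty$, so that
\begin{equation}\label{eq_Preduction}
\norm{\vP-\wtd{\vP}}_\infty \leq \norm{\wht{\vP}-\vP}_\infty + 2\norm{\vDelta}_\infty .
\end{equation}

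It then remains to control $\norm{\wht{\vP}-\vP}_\infty$. I would pass to the spectral norm via $\norm{\vA}_\infty\leq\sqrt{n}\,\norm{\vA}$ for $n\times n$ matrices, and invoke the concentration estimate built for Theorem \ref{thrm_main_MR}: under the same assumptions, trajectory length and success event, $\norm{\wht{\vP}-\vP}\leq \frac{4(\epsilon_2+1.5\eta)(\norm{\vDelta}+\norm{\bar{\vP}})}{\pi_{\min}}$ (this is exactly the factor that appears inside the square in \eqref{eq_30}). Since $\vP$ and $\bar{\vP}$ are row stochastic and $\norm{\vDelta}$ is small by assumption~(iv), one has $\norm{\bar{\vP}}+\norm{\vDelta}\leq 3\,\sigma_1(\vP)$, whence $\norm{\wht{\vP}-\vP}_\infty\leq 12\sqrt{n}\,\pi_{\min}^{-1}\sigma_1(\vP)(\epsilon_2+1.5\eta)$. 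Combining this with \eqref{eq_Preduction} gives the claim, and the probability and length requirements are inherited verbatim from Theorem \ref{thrm_main_MR}.

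The genuinely hard part is the spectral concentration $\norm{\wht{\vP}-\vP}\lesssim \pi_{\min}^{-1}\sigma_1(\vP)(\epsilon_2+1.5\eta)$, which I am importing rather than reproving here; it must disentangle two error sources: the statistical fluctuation of the transition counts $N_{ij}$ around their means for a single \emph{dependent} trajectory (handled by Markov-chain concentration through the mixing time $\tau_*$ and the visit frequencies $N_i\approx N\vpi_i$, which is why $\tau_*$, $\pi_{\max}$ and the lower bound on $N$ all enter), and the systematic error from building $\wht{\vP}$ out of the \emph{estimated} sequence $\wht{X}$ rather than the true $X$, which costs an extra term proportional to the mistake rate $\eta$ (the $1.5\eta$ contribution). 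I would also note that the hypothesis $MR=0$ is precisely what makes the aggregation error collapse to the clean form \eqref{eq_Preduction}: without it, the misclustering rate from Theorem \ref{thrm_main_MR} would have to be propagated through the cluster-wise row-averaging, producing cross terms between wrongly assigned rows that no longer telescope against $\bar{\vP}$.
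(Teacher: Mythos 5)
Your proposal is correct, and while it shares the paper's overall skeleton (reduce the aggregation error to $\norm{\wht{\vP}-\vP}_\infty$ plus a $2\norm{\vDelta}_\infty$ term, then import the Markov-chain concentration bound from the proof of Theorem \ref{thrm_main_MR}), your handling of the aggregation step is a genuinely different and sharper decomposition. The paper first applies the triangle inequality $\norm{\vP-\wtd{\vP}}_\infty \leq \norm{\vP-\wht{\vP}}_\infty + \norm{\wht{\vP}-\wtd{\vP}}_\infty$ and then bounds $\norm{\wht{\vP}(i,:)^\T-\wtd{\vP}(i,:)^\T}_1$ by the maximal pairwise $\ell_1$ distance between rows of $\wht{\vP}$ within a cluster, telescoping through $\vP$ and $\bar{\vP}$; this costs $2\norm{\wht{\vP}-\vP}_\infty + 2\norm{\vDelta}_\infty$ and yields the final reduction $\norm{\vP-\wtd{\vP}}_\infty \leq 3\norm{\wht{\vP}-\vP}_\infty + 2\norm{\vDelta}_\infty$. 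You instead telescope the $N_k$-weighted average directly against the rows of $\bar{\vP}$, which are constant within each (correctly recovered) cluster, obtaining $\norm{\vP-\wtd{\vP}}_\infty \leq \norm{\wht{\vP}-\vP}_\infty + 2\norm{\vDelta}_\infty$ --- a factor-of-3 improvement on the estimation term, so your argument actually supports a constant $4\sqrt{n}$ in place of $12\sqrt{n}$; you only recover $12$ via the generous estimate $\norm{\bar{\vP}}+\norm{\vDelta}\leq 3\sigma_1(\vP)$, which does follow from assumption (iv) but is an unnecessary detour, since Lemma \ref{lemma_ConcentrationofMC_withMistake} gives the concentration bound directly in terms of $\norm{\vP}=\sigma_1(\vP)$. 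A further merit of your write-up is that you make explicit the conversion $\norm{\vA}_\infty\leq\sqrt{n}\norm{\vA}$ needed to pass from the spectral-norm concentration event to the $\ell_\infty$ conclusion; the paper's final ``plugging in'' step performs this silently. Both proofs invoke $MR=0$ at the same point and for the same purpose --- it is what makes the $\bar{\vP}$ contributions cancel within each estimated cluster (you, via partition equality after the optimal label matching; the paper, via vanishing of $\norm{\bar{\vP}(k,:)^\T-\bar{\vP}(j,:)^\T}_1$ for $k,j\in\wht{\Omega}_s$) --- and the probability statement and trajectory-length requirement are inherited identically in both.
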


Theorem \ref{thrm_main_P} gives the upper bound on the approximation error of $\wtd{\vP}$, which can be used to upper bound the stationary and transient behavior differences in Theorem \ref{thrm_MCDiff}. The limitation of the theorem is that the result holds only when the clustering error is $0$.

\section{Experiments}\label{sec_experiment}
\subsection{Synthetic Data}
We first study the performance of our approach with synthetic data. In the Markov jump model, we let $n_a{=}3, n_c{=}2$ and number of modes $n=50$. For each mode, the dynamics are generated by uniformly sampling its poles on $(-1,1)$. We let input $u_t \sim \N(0,1)$ and noise $n_t \sim Unif(-n_{\max}, n_{\max})$. The state space $[n]$ is partitioned into $r$ clusters $\Omega_{1:r}$ randomly such that every possible partition is sampled with equal probability. The mode transition probabilities $\bar{\vP}(\Omega_k,:)$ for every $k$ and initial mode distribution $\pi_0$ are sampled from uniform Dirichlet distribution. 

The error metrics we evaluate are: (i) clustering error $\text{CE} = n^{-1} \min_{k \in \mathcal{K}} \sum_{j=1}^{r} |\curlybrackets{i: i \in \Omega_j; i \notin \wht{\Omega}_{k(j)}}|$ where $\mathcal{K}$ is given in Definition \ref{def_MisclusteringRate}; (ii) $\norm{\wtd{\vpi} - \vpi}_1$, i.e. the difference between $\wtd{\vP}$ and $\vP$ in terms of stationary distributions. For each parameter setup, we record the average of these two metrics over 100 experiments.

\subsubsection{Without Perturbation ($\vDelta=0$)} \label{subsubsec_1}
We first evaluate how the performance depend on number of clusters $r$ and noise magnitude $n_{\max}$. We set perturbation $\vDelta=0$ for these test cases. The experiment results are given in Fig.(\ref{subfig_1}-\ref{subfig_4}). We set $n_{\max}=0.1$ in Fig.(\ref{subfig_1}-\ref{subfig_2}) and $r=6$ in Fig.(\ref{subfig_3}-\ref{subfig_4}).

\subsubsection{With Perturbation ($\vDelta \neq 0$)}
In this test case, we fix $n=50, r=6, n_{\max}=0.05, N=10^5$. 
The space of $\vDelta$ is a polytope which makes it difficult to sample uniformly, so instead for $i \in \Omega_k$, we sample $\vP(i,:)$ from Dirichlet distribution with parameters $\alpha \vP(\Omega_k,:)$ and record $\vDelta = \vP - \bar{\vP}$. In this case, $\expctn[\vP(i,:)] = \vP(\Omega_k,:)$ and $\alpha$ controls how much $\vP(i,:)$ deviates from $\vP(\Omega_k,:)$. We sweep $\alpha$ and use scatter plots Fig.(\ref{subfig_5}-\ref{subfig_6}) to show how the error metrics vary with $\norm{\vDelta}$.

\begin{figure}[h!]
	\centering
	\captionsetup[subfloat]{captionskip=0pt, farskip=0pt}
	\subfloat[]{\includegraphics[width=2.5in]{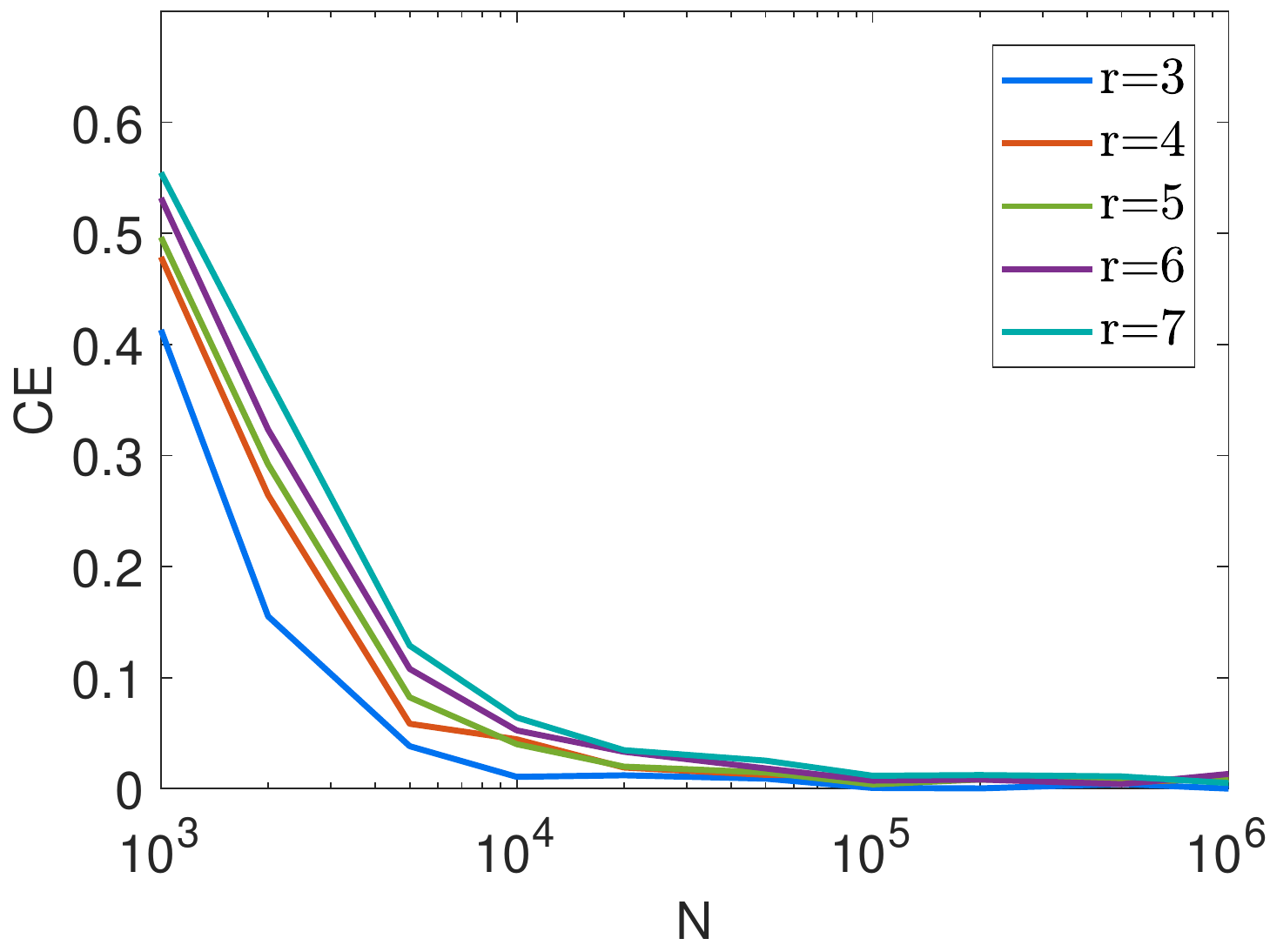} \label{subfig_1}} \ 
	\subfloat[]{\includegraphics[width=2.5in]{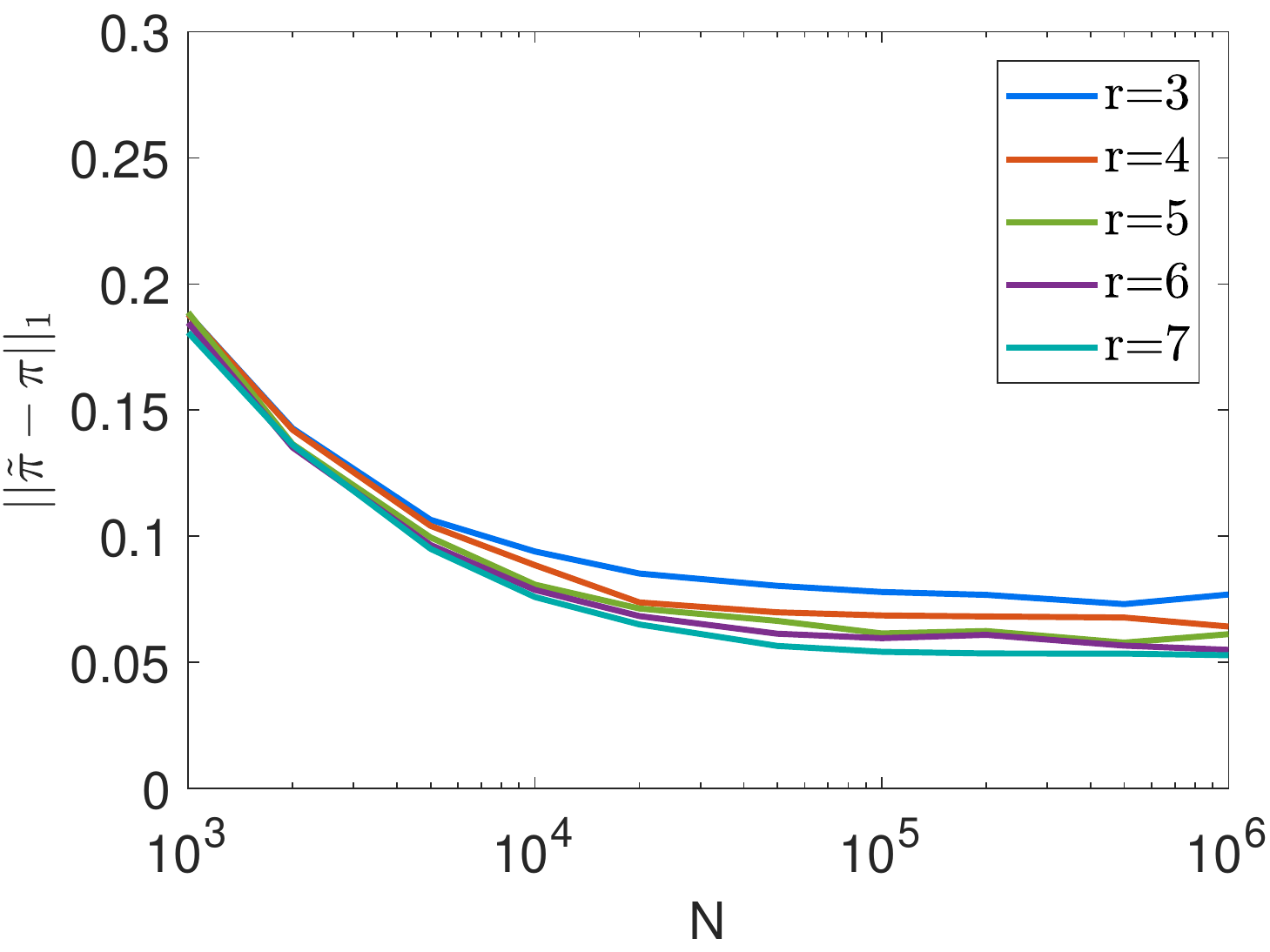} \label{subfig_2}} \\ 
	\subfloat[]{\includegraphics[width=2.5in]{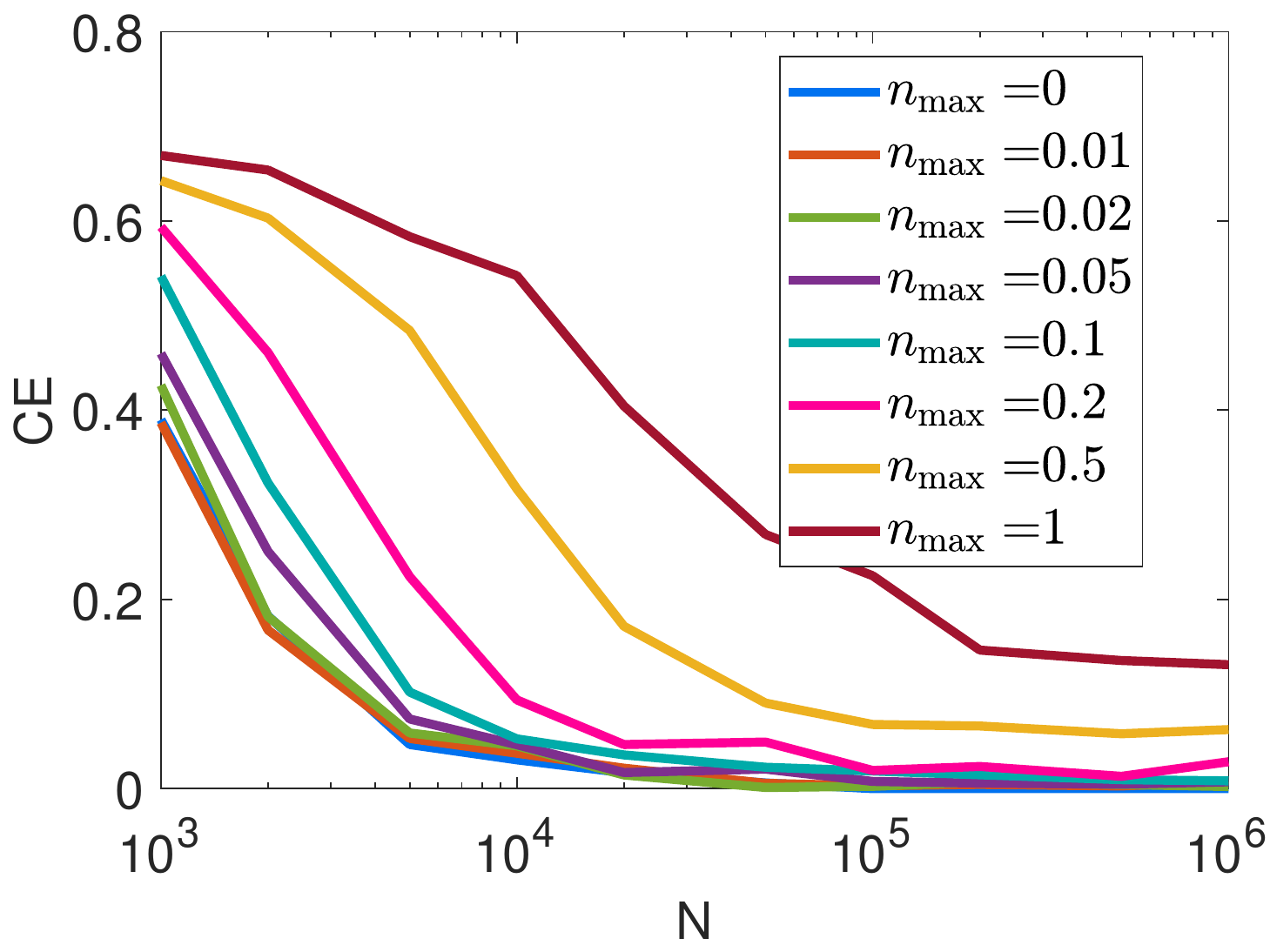} \label{subfig_3}} \ 
	\subfloat[]{\includegraphics[width=2.5in]{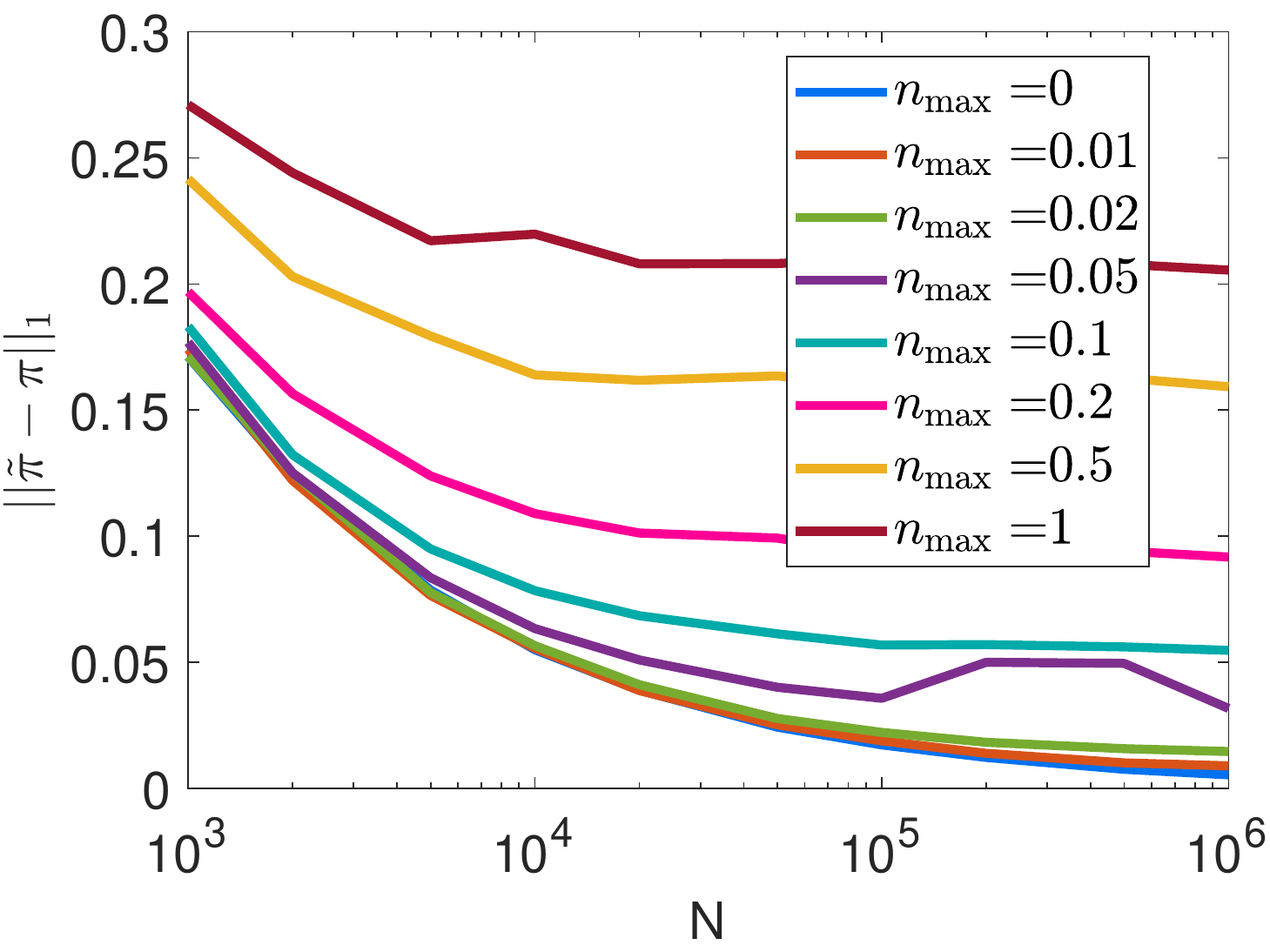} \label{subfig_4}} \\
	\subfloat[]{\includegraphics[width=2.5in]{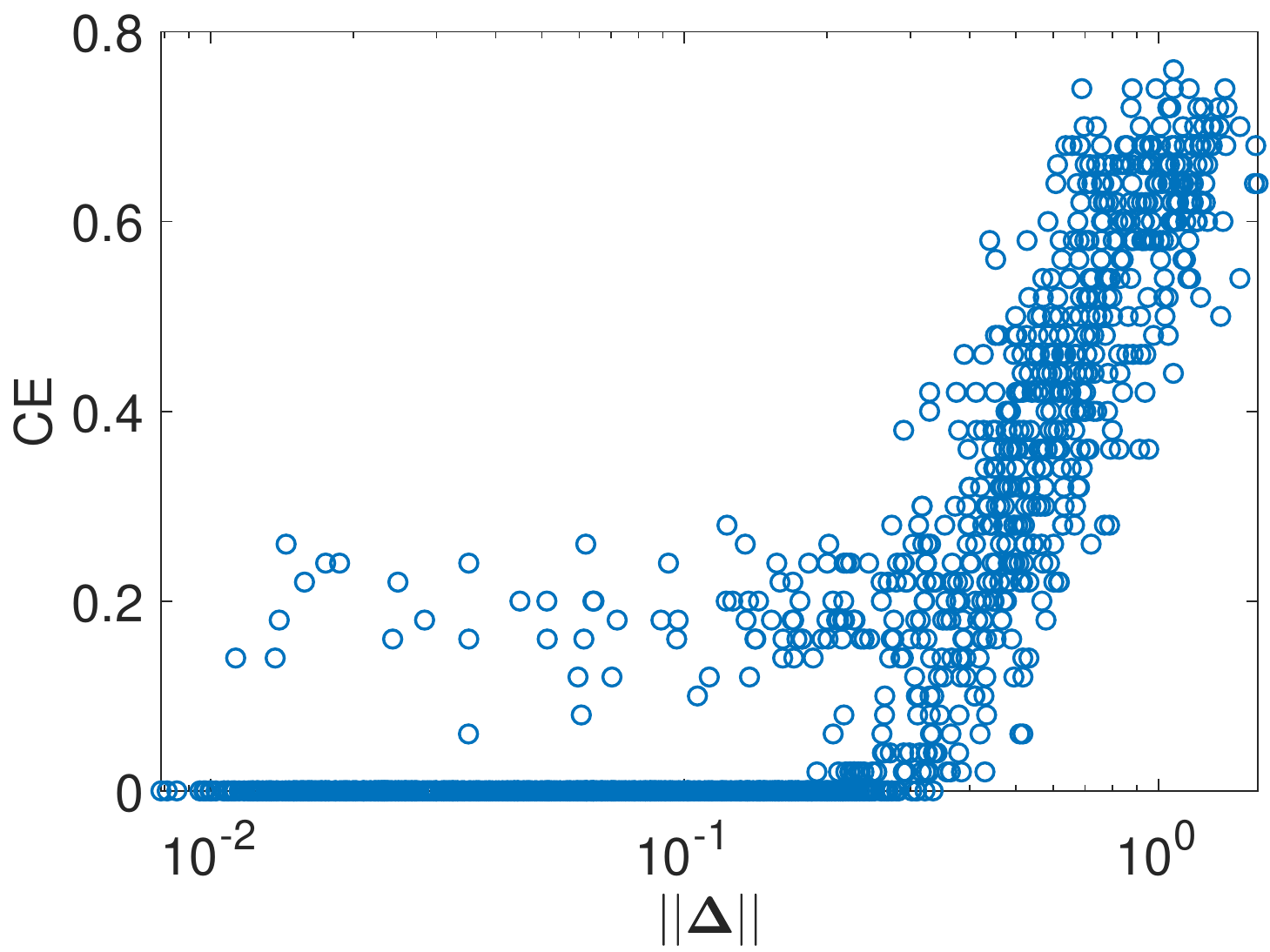} \label{subfig_5}} \
	\subfloat[]{\includegraphics[width=2.5in]{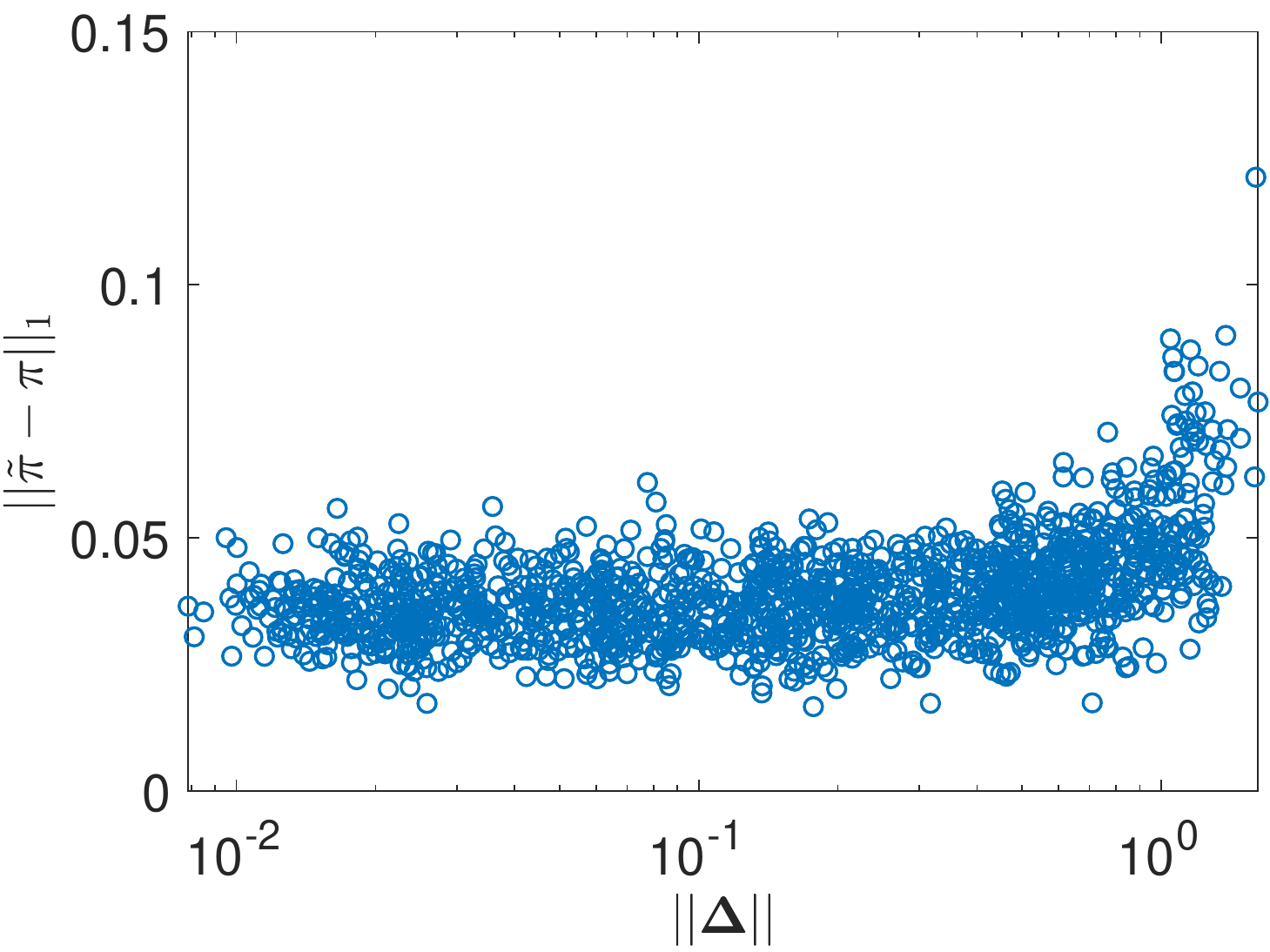} \label{subfig_6}} \
	\caption{\small Performance vs: (a,b) $N$ and $r$; (c,d) $N$ and $n_{\max}$; (e,f) $\norm{\vDelta}$}
	\label{fig_SynDataExp}
\end{figure}

\subsection{Practically Motivated Example---Patrol Robot}
Now we consider a more realistic case involving Markov jump system that can possibly benefit from our approach. Assume in a region, we have $n$ stations each with position $p_i \in \R$ and at time $t$ there is only one active station $s_t$ that generates requests; the sequence of active stations $s_{0:t}$ follows a Markov chain $\vP$. There is a robot with position $x_t \in \R$ at time $t$ aiming to reach the active station as fast and close as possible. Assuming the dynamics and control law of the robot are given by
\begin{equation}
\begin{split}
x_{t+1} &= x_t + u_t + n_t , \\
u_t &= K(p_{s_t} - x_t) ,
\end{split}
\end{equation}
the closed-loop dynamics take the form 
\begin{equation}
x_{t+1} = (1-K) x_t + K p_{s_t} + n_t ,
\end{equation}
which is a Markov jump model. In this setting, if the underlying Markov chain bears aggregatability property to some extent, we could use our approach to uncover the corresponding partition of modes as well as find an approximation of Markov transition matrix with stationary distribution that is easier to compute. Understanding the similarities between the stations' activation schedule can be useful to design improved control strategies for the robot. 


%

In the experiment, we set $n=50, p_i = i, K=0.7, n_t \sim \N(0,0.1), N=10^6$ and sample $\bar{\vP}, \vP$ same as \ref{subsubsec_1}. Over the average of 100 runs, clustering error $\text{CE} = 0.04$ and $\norm{\wtd{\vpi} - \vpi}_1 = 0.07$.

\section{Conclusions \& Future Work}
In this paper, we consider the problem of model aggregation for Markov jump system from the perspective of clustering the modes based on their transition distributions. The proposed approach has guaranteed clustering error upper bound and exhibits decent performance in the experiments.

There are several interesting directions for future work: (i) we will see how lumpable Markov chain can help reformulate the model reduction problem; (ii) in the algorithm, after obtaining an estimate of the Markov transition matrix, one might use it to get a better estimate of the mode sequence, so several iterations between estimating switching sequence and Markov transition matrix may make both estimates more accurate; (iii) after the mode clustering, it is worth investigating if we could use a single mode to characterize the switching dynamics of all the modes within the cluster so that we could truly reduce the number of modes in the model.

\bibliographystyle{IEEEtran}
\bibliography{IEEEabrv,MarkovClustering}

\appendix

\section{Technical Lemmas} \label{sctn_lemmas}
This section provides technical lemmas that are used to prove the main results of the paper. The proofs for main theorems are given later in Section \ref{sctn_mainproof}.
\subsection{Lemmas on Matrix Perturbation Theory}
Note that in Line \ref{algline_3} of Algorithm \ref{Alg_1}, SVD is performed on matrix $\wht{\vP}$. To analyze this step, in this section, we give a few lemmas about how the perturbation of a matrix will affect its left singular vector matrix.

\begin{definition}[Distance between column spaces] \label{def_subspacedistance}
	For two matrices $\vE, \vF \in \dm{n}{n}, rank(\vE) {=} rank(\vF) {=} r$, let $\mathcal{E}, \mathcal{F}$ denote their column spaces and $\vPi_\vE$ and $\vPi_\vF$ denote corresponding projection matrices of the column spaces. Then, the principal angles \cite{knyazev2002principal} $\theta_1, \dots, \theta_r$ between $\mathcal{E}, \mathcal{F}$ can be shown to be the arcsines of first $r$ singular values of matrix $(\vI - \vPi_\vE) \vPi_\vF$. Let $\sin \vTheta(\vE, \vF):=\diag(\sin(\theta_1), \dots, \sin(\theta_r))$. The $\sin\vTheta$ distance between column space of $\vE$ and $\vF$ is defined as $\norm{\sin \vTheta(\vE, \vF)}_F$, which satisfies
	\begin{equation}
	\norm{\sin \vTheta (\vE, \vF) }_F = \norm{(\vI - \vPi_\vE) \vPi_\vF}_F.
	\end{equation}
\end{definition}

\begin{lemma}[Wedin's Perturbation Theorem \cite{wedin1972perturbation}] \label{lemma_Wedin}
	Let $\vA, \wht{\vA} \in \dm{m}{n}$. Let $\vA = \vU \vSigma \vV^\T$ be the SVD of $\vA$ where singular values on the diagonal of $\vSigma$ are arranged in descending order. Let $\vU_r = \vU(:,1{:}r),$ $\vSigma_r = \vSigma(1{:}r,1{:}r), \vV_r = \vV(:,1{:}r)$ denote the first $r$ singular components of matrix $\vA$. Similarly, let $\curlybrackets{\wht{\vU}_r, \wht{\vSigma}_r, \wht{\vV}_r}$ denote the first $r$ singular components of matrix $\wht{\vA}$. Then, if $\sigma_r(\wht{\vA}) - \sigma_{r+1}(\vA)>0$,
	\begin{equation}
	\max \curlybrackets{\norm{\sin \vTheta (\vV_r, \wht{\vV}_r)}, \norm{\sin \vTheta (\vU_r, \wht{\vU}_r)}} 
	\leq 
	\frac{\max \curlybrackets{\norm{(\vA-\wht{\vA}) \wht{\vV}_r}, \norm{\wht{\vU}_r^\T (\vA-\wht{\vA})}}}{\sigma_r(\wht{\vA}) - \sigma_{r+1}(\vA)} .
	\end{equation}
	Otherwise (note that in this case, we have $\sigma_r(\vA) - \sigma_{r+1}(\wht{\vA}) \geq 0$), we have
	\begin{equation}
	\max \curlybrackets{\norm{\sin \vTheta (\vV_r, \wht{\vV}_r)}, \norm{\sin \vTheta (\vU_r, \wht{\vU}_r)}} 
	\leq 
	\frac{\max \curlybrackets{\norm{(\vA-\wht{\vA}) {\vV_r}}, \norm{{\vU}_r^\T (\vA-\wht{\vA})}}}{\sigma_r({\vA}) - \sigma_{r+1}(\wht{\vA})} .
	\end{equation}	
	Note that the norm in the above equations can be replaced with any unitarily invariant norm.
\end{lemma}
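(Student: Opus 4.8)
\noindent\emph{Proof proposal.} The plan is to follow the residual-equation route standard in $\sin\vTheta$ analysis. First I would reduce both distances to norms of cross-projections. Writing $\vU_\perp,\vV_\perp$ for the trailing left/right singular vectors of $\vA$ (those beyond the top $r$) and $\vSigma_\perp$ for the corresponding singular values, Definition~\ref{def_subspacedistance} together with unitary invariance gives $\norm{\sin\vTheta(\vU_r,\wht{\vU}_r)} = \norm{\vU_\perp^\T\wht{\vU}_r}$ and $\norm{\sin\vTheta(\vV_r,\wht{\vV}_r)} = \norm{\vV_\perp^\T\wht{\vV}_r}$, because $\vI-\vPi_{\vU_r}=\vU_\perp\vU_\perp^\T$ and multiplying by matrices with orthonormal columns/rows preserves singular values. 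Hence it suffices to bound $\vX := \vU_\perp^\T\wht{\vU}_r$ and $\vY := \vV_\perp^\T\wht{\vV}_r$ in the given norm.

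Next I would introduce the residuals $\vR := (\vA - \wht{\vA})\wht{\vV}_r$ and $\vS := (\vA - \wht{\vA})^\T\wht{\vU}_r$. Since $\wht{\vA}\wht{\vV}_r = \wht{\vU}_r\wht{\vSigma}_r$ and $\wht{\vU}_r^\T\wht{\vA} = \wht{\vSigma}_r\wht{\vV}_r^\T$ from the SVD of $\wht{\vA}$, these equal $\vR = \vA\wht{\vV}_r - \wht{\vU}_r\wht{\vSigma}_r$ and $\vS^\T = \wht{\vU}_r^\T\vA - \wht{\vSigma}_r\wht{\vV}_r^\T$, so $\norm{\vR}$ and $\norm{\vS^\T}=\norm{\wht{\vU}_r^\T(\vA-\wht{\vA})}$ are exactly the numerator quantities. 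Projecting the residuals onto the complementary subspaces and using $\vU_\perp^\T\vA=\vSigma_\perp\vV_\perp^\T$ (and its transpose $\vV_\perp^\T\vA^\T=\vSigma_\perp\vU_\perp^\T$) makes the top-$r$ block of each SVD cancel, leaving the coupled Sylvester-type system
\begin{equation}
\vSigma_\perp\vY - \vX\wht{\vSigma}_r = \vU_\perp^\T\vR, \qquad \vSigma_\perp\vX - \vY\wht{\vSigma}_r = \vV_\perp^\T\vS .
\end{equation}

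I would then decouple this system by norm inequalities rather than by stacking, which is what preserves the sharp constant $1$. Using that $\wht{\vSigma}_r$ has smallest singular value $\sigma_r(\wht{\vA})$ and $\vSigma_\perp$ has largest singular value $\sigma_{r+1}(\vA)$, the submultiplicative bounds $\norm{\vX\wht{\vSigma}_r}\geq \sigma_r(\wht{\vA})\norm{\vX}$ and $\norm{\vSigma_\perp\vY}\leq\sigma_{r+1}(\vA)\norm{\vY}$ hold for every unitarily invariant norm (via $\norm{\vX}=\norm{\vX\wht{\vSigma}_r\wht{\vSigma}_r^{-1}}\leq\norm{\vX\wht{\vSigma}_r}\cdot\norm{\wht{\vSigma}_r^{-1}}_2$). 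Together with $\norm{\vU_\perp^\T\vR}\leq\norm{\vR}$ they turn the two equations into the scalars $\sigma_r(\wht{\vA})\norm{\vX}\leq\sigma_{r+1}(\vA)\norm{\vY}+\norm{\vR}$ and $\sigma_r(\wht{\vA})\norm{\vY}\leq\sigma_{r+1}(\vA)\norm{\vX}+\norm{\vS}$. Setting $m=\max\curlybrackets{\norm{\vX},\norm{\vY}}$ and evaluating whichever factor attains $m$, the cross term is at most $\sigma_{r+1}(\vA)\,m$, so $(\sigma_r(\wht{\vA})-\sigma_{r+1}(\vA))\,m\leq\max\curlybrackets{\norm{\vR},\norm{\vS}}$, which is the claimed inequality under the gap hypothesis $\sigma_r(\wht{\vA})-\sigma_{r+1}(\vA)>0$. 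The ``otherwise'' case follows by the same argument with the roles of $\vA$ and $\wht{\vA}$ swapped---projecting onto the trailing subspaces of $\wht{\vA}$ and using residuals $(\vA-\wht{\vA})\vV_r$, $\vU_r^\T(\vA-\wht{\vA})$---producing the gap $\sigma_r(\vA)-\sigma_{r+1}(\wht{\vA})$.

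The step I expect to be most delicate is the derivation of the coupled system: one must peel off the top-$r$ block of each SVD and verify that projection onto $\vU_\perp,\vV_\perp$ annihilates the top-$r$ contribution, leaving only the $\vSigma_\perp$ blocks. The tempting shortcut of stacking $\vX,\vY$ into a single Sylvester equation should be resisted, since bounding the stacked residual by $\max\curlybrackets{\norm{\vR},\norm{\vS}}$ introduces a spurious factor $\sqrt{2}$; the coupled scalar inequalities above avoid this and are exactly what yield the stated constant for an arbitrary unitarily invariant norm.
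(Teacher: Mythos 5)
The paper does not prove this lemma at all: it is imported verbatim as a known result, cited to Wedin's 1972 paper, and the only original work the authors do with it is the corollary in Lemma~\ref{lemma_AdvancedWedinBound}. So there is no in-paper proof to compare against; your proposal must stand on its own, and it does. Your argument is the classical residual-based proof of the generalized $\sin\vTheta$ theorem: the reduction $\norm{\sin\vTheta(\vU_r,\wht{\vU}_r)}=\norm{\vU_\perp^\T\wht{\vU}_r}$ is valid for every unitarily invariant norm (padding singular values with zeros does not change a symmetric gauge function); the residual identities $\vR=\vA\wht{\vV}_r-\wht{\vU}_r\wht{\vSigma}_r$ and $\vS=\vA^\T\wht{\vU}_r-\wht{\vV}_r\wht{\vSigma}_r$ are correct; projecting onto the trailing subspaces of $\vA$ does annihilate the top-$r$ block and yields the coupled system you state; and your decoupling via $\norm{\vX}\leq\norm{\vX\wht{\vSigma}_r}\,\norm{\wht{\vSigma}_r^{-1}}_2$ and $\norm{\vSigma_\perp\vY}\leq\sigma_{r+1}(\vA)\norm{\vY}$ uses exactly the mixed submultiplicativity $\norm{\vB\vC}\leq\norm{\vB}\,\norm{\vC}_2$ that holds for unitarily invariant norms, with invertibility of $\wht{\vSigma}_r$ guaranteed by the gap hypothesis $\sigma_r(\wht{\vA})>\sigma_{r+1}(\vA)\geq 0$. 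The max-trick then gives the stated constant $1$, and you are right that this beats the $\sqrt{2}$ incurred by stacking. The ``otherwise'' case by swapping $\vA$ and $\wht{\vA}$ is legitimate since $\sin\vTheta$ is symmetric in its two subspaces and $\norm{\vA-\wht{\vA}}=\norm{\wht{\vA}-\vA}$ (note the bound is vacuous if $\sigma_r(\vA)-\sigma_{r+1}(\wht{\vA})=0$, an edge case already present in the statement itself). Two cosmetic repairs: in the rectangular case your second equation should read $\vSigma_\perp^\T\vX-\vY\wht{\vSigma}_r=\vV_\perp^\T\vS$, since $\vV_\perp^\T\vA^\T=\vSigma_\perp^\T\vU_\perp^\T$ (the spectral norm is unaffected, so the bound survives), and you should invoke $\norm{\vM}=\norm{\vM^\T}$ for unitarily invariant norms to identify $\norm{\vS}$ with the numerator quantity $\norm{\wht{\vU}_r^\T(\vA-\wht{\vA})}$.
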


\begin{lemma}[Weyl's Bound \cite{Weyl1912}] \label{lemma_Weyl}
	Given $\vA, \wht{\vA} \in \dm{m}{n}$, their singular values satisfy
	\begin{equation}
	\max_{i \leq \min \curlybrackets{m,n}} |\sigma_i(\vA) - \sigma_i(\wht{\vA})| \leq \norm{\vA - \wht{\vA}} .
	\end{equation}
\end{lemma}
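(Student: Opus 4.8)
The plan is to prove this classical bound via the Courant--Fischer min-max characterization of singular values, after which the statement collapses to a single application of the triangle inequality for the operator norm. I would treat the two matrices symmetrically and prove both one-sided inequalities by the same argument.

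First I would recall the variational description: for any $\vM \in \dm{m}{n}$ and any index $i \leq \min\curlybrackets{m,n}$,
\[
\sigma_i(\vM) = \max_{\dim \mathcal{S} = i}\ \min_{\substack{\vx \in \mathcal{S} \\ \norm{\vx}=1}} \norm{\vM \vx},
\]
where $\mathcal{S}$ ranges over $i$-dimensional subspaces of $\R^n$. This is immediate from the SVD $\vM = \vU \vSigma \vV^\T$, since $\norm{\vM\vx}^2 = \vx^\T \vM^\T\vM\,\vx = \sum_k \sigma_k(\vM)^2 (\vv_k^\T \vx)^2$ is exactly the Rayleigh quotient of the PSD matrix $\vM^\T\vM$, whose eigenvalues are the $\sigma_k(\vM)^2$.

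Next I would fix $i$ and let $\mathcal{S}^*$ be an $i$-dimensional subspace attaining the maximum for $\wht{\vA}$, so that $\min_{\vx \in \mathcal{S}^*,\,\norm{\vx}=1}\norm{\wht{\vA}\vx} = \sigma_i(\wht{\vA})$. For every unit vector $\vx \in \mathcal{S}^*$, the reverse triangle inequality gives
\[
\norm{\vA\vx} \geq \norm{\wht{\vA}\vx} - \norm{(\vA-\wht{\vA})\vx} \geq \norm{\wht{\vA}\vx} - \norm{\vA-\wht{\vA}}.
\]
Taking the minimum over such $\vx$ and using that $\mathcal{S}^*$ is a feasible (though not necessarily optimal) competitor in the max for $\vA$, I get $\sigma_i(\vA) \geq \sigma_i(\wht{\vA}) - \norm{\vA-\wht{\vA}}$, that is, $\sigma_i(\wht{\vA}) - \sigma_i(\vA) \leq \norm{\vA-\wht{\vA}}$. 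Exchanging the roles of $\vA$ and $\wht{\vA}$ gives the reverse inequality, so $|\sigma_i(\vA)-\sigma_i(\wht{\vA})| \leq \norm{\vA-\wht{\vA}}$ for every $i$; taking the maximum over $i$ finishes the proof.

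There is no genuine obstacle here beyond correctly stating the min-max principle: all the analytic weight sits in the variational characterization, and the perturbation step is a one-line triangle inequality with the dimension count guaranteeing that $\mathcal{S}^*$ remains admissible for both matrices. An equally clean alternative, should one prefer to reduce to the eigenvalue case, is to pass to the Hermitian dilation $\left(\begin{smallmatrix} 0 & \vA \\ \vA^\T & 0 \end{smallmatrix}\right)$, whose eigenvalues are $\pm\sigma_i(\vA)$, and invoke the eigenvalue form of Weyl's inequality; I would favor the direct min-max route to keep the argument self-contained.
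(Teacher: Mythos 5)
Your proof is correct, but there is nothing in the paper to compare it against: the paper does not prove this lemma, importing it as a classical result with only the citation to Weyl's 1912 paper (it is then used as an ingredient in Lemma \ref{lemma_AdvancedWedinBound}). Your derivation is the standard self-contained argument and every step checks out. The Courant--Fischer characterization $\sigma_i(\vM) = \max_{\dim \mathcal{S} = i} \min_{\vx \in \mathcal{S},\, \norm{\vx}=1} \norm{\vM \vx}$ does follow from the eigenvalue min-max principle applied to $\vM^\T \vM$, whose eigenvalues are the $\sigma_k(\vM)^2$ in decreasing order; the crucial observation that the optimizing subspace $\mathcal{S}^*$ for $\wht{\vA}$ remains an admissible competitor in the maximization for $\vA$ is valid because the feasible set (all $i$-dimensional subspaces of $\R^n$) does not depend on the matrix; and the chain $\min_{\vx \in \mathcal{S}^*} \norm{\vA\vx} \geq \min_{\vx \in \mathcal{S}^*} \parenthesesbig{\norm{\wht{\vA}\vx} - \norm{\vA - \wht{\vA}}} = \sigma_i(\wht{\vA}) - \norm{\vA - \wht{\vA}}$ correctly yields the one-sided bound, with the role swap supplying the absolute value. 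The Hermitian dilation alternative you sketch is also sound and is the more common reduction to the eigenvalue form of Weyl's inequality; your preference for the direct min-max route is reasonable since it avoids introducing the dilation's $\pm\sigma_i$ spectrum bookkeeping. The only cosmetic point worth adding is an explicit statement that singular values are indexed in decreasing order, since both the min-max formula and the indexed comparison $|\sigma_i(\vA) - \sigma_i(\wht{\vA})|$ presuppose that convention.
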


\begin{lemma}[Combination of Wedin's Perturbation Theorem and Weyl's Bound] \label{lemma_AdvancedWedinBound}
	Let $\vA, \wht{\vA} \in \dm{m}{n}$. Let $\vA = \vU \vSigma \vV^\T$ be the SVD of $\vA$ where singular values on the diagonal of $\vSigma$ are arranged in descending order. Let $\vU_r = \vU(:,1{:}r), \vSigma_r = \vSigma(1{:}r,1{:}r), \vV_r = \vV(:,1{:}r)$ denote the first $r$ singular components of matrix $\vA$. Similarly, let $\curlybrackets{\wht{\vU}_r, \wht{\vSigma}_r, \wht{\vV}_r}$ denote the first $r$ singular components of matrix $\wht{\vA}$. Then
	\begin{equation}\label{eq_WedinModified}
	\max \curlybrackets{\norm{\sin \vTheta (\vV_r, \wht{\vV}_r)}, \norm{\sin \vTheta (\vU_r, \wht{\vU}_r)}}
	\leq 
	\frac{2 \norm{\vA - \wht{\vA}}}{\sigma_r(\vA) - \sigma_{r+1}(\vA)} .
	\end{equation}
\end{lemma}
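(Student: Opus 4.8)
The plan is to feed Weyl's bound (Lemma \ref{lemma_Weyl}) into Wedin's theorem (Lemma \ref{lemma_Wedin}) so as to replace the data-dependent numerator and denominator in Wedin's estimate with quantities depending only on $\norm{\vA - \wht{\vA}}$ and the singular values of $\vA$ alone. Write $E := \norm{\vA - \wht{\vA}}$ and $\delta := \sigma_r(\vA) - \sigma_{r+1}(\vA)$; the target bound is then $2E/\delta$. Both branches of Lemma \ref{lemma_Wedin} will be handled, and the argument will turn on comparing $E$ to $\delta$.

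First I would control the numerators appearing in either branch of Lemma \ref{lemma_Wedin}. Since $\vU_r, \vV_r, \wht{\vU}_r, \wht{\vV}_r$ all have orthonormal columns, multiplying by them (or their transposes) cannot increase the spectral norm, so each of $\norm{(\vA-\wht{\vA})\wht{\vV}_r}$, $\norm{\wht{\vU}_r^\T(\vA-\wht{\vA})}$, $\norm{(\vA-\wht{\vA})\vV_r}$, $\norm{\vU_r^\T(\vA-\wht{\vA})}$ is at most $E$. Hence the numerator in Wedin's bound is $\le E$ regardless of which case applies. Next, for the denominators, Weyl's bound gives $\sigma_r(\wht{\vA}) \ge \sigma_r(\vA) - E$ and $\sigma_{r+1}(\wht{\vA}) \le \sigma_{r+1}(\vA) + E$, so the relevant denominator ($\sigma_r(\wht{\vA}) - \sigma_{r+1}(\vA)$ in the first branch, $\sigma_r(\vA) - \sigma_{r+1}(\wht{\vA})$ in the second) is at least $\delta - E$ in both cases.

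These estimates alone yield a bound of the form $E/(\delta - E)$, which only matches $2E/\delta$ once $E \le \delta/2$, so I would finish with a case split on the size of $E$. If $E > \delta/2$, then $2E/\delta > 1$; and because each $\sin\vTheta$ matrix is diagonal with entries $\sin\theta_i \in [0,1]$, its spectral norm never exceeds $1$, so the claimed inequality holds trivially. If instead $E \le \delta/2$, then $\sigma_r(\wht{\vA}) - \sigma_{r+1}(\vA) \ge \delta - E \ge \delta/2 > 0$, which places us squarely in the first branch of Lemma \ref{lemma_Wedin} with denominator $\ge \delta/2$ and numerator $\le E$, giving $\max\curlybrackets{\norm{\sin\vTheta(\vV_r,\wht{\vV}_r)}, \norm{\sin\vTheta(\vU_r,\wht{\vU}_r)}} \le E/(\delta/2) = 2E/\delta$.

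The one point requiring care is precisely this case split: Wedin's raw estimate degrades as the effective gap $\delta - E$ shrinks, so the numerator and denominator bounds cannot simply be combined uniformly to recover the clean constant. The large-perturbation regime must instead be absorbed by the elementary fact that the $\sin\vTheta$ distance in spectral norm is bounded by $1$, and recognizing that $E > \delta/2$ forces $2E/\delta > 1$ is exactly what makes the two regimes join at the constant $2$. Everything else — the orthonormality reduction of the numerators and the two Weyl shifts of the singular values — is routine.
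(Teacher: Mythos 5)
Your proof is correct and takes essentially the same approach as the paper: it dismisses the regime $\norm{\vA-\wht{\vA}} \geq \frac{1}{2}\parenthesesbig{\sigma_r(\vA)-\sigma_{r+1}(\vA)}$ via $\norm{\sin\vTheta(\cdot,\cdot)} \leq 1$, and in the complementary regime combines the two Weyl shifts with Wedin's theorem. The only cosmetic differences are that you bound the denominator directly by $\delta/2$ (correctly noting that the first branch of Wedin must then apply, so the second branch is vacuous), whereas the paper obtains the same constant by adding $\norm{\vA-\wht{\vA}}$ to both numerator and denominator of $\norm{\vA-\wht{\vA}}/(\delta - \norm{\vA-\wht{\vA}})$ and formally treats both Wedin branches.
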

\begin{proof}
	Note that if $2 \norm{\vA - \wht{\vA}} \geq \sigma_r(\vA) - \sigma_{r+1}(\vA)$, then \eqref{eq_WedinModified} holds trivially since $\norm{\sin \vTheta (\cdot, \cdot)} \leq 1$. So, it suffices to consider the case when $2 \norm{\vA - \wht{\vA}} < \sigma_r(\vA) - \sigma_{r+1}(\vA)$.
	
	Using Weyl's bound in Lemma \ref{lemma_Weyl}, we have
	\begin{gather}
	\sigma_r (\wht{\vA}) \geq \sigma_r (\vA) - \norm{\vA - \wht{\vA}} \label{eq_1} , \\
	\sigma_{r+1} (\wht{\vA}) \leq \sigma_{r+1} (\vA) + \norm{\vA - \wht{\vA}} \label{eq_2} .
	\end{gather}
	
	For the case when $\sigma_r(\wht{\vA}) - \sigma_{r+1}(\vA) >0 $, Wedin's perturbation theorem in Lemma \ref{lemma_Wedin} can give
	\begin{equation}
	\max \curlybrackets{\norm{\sin \vTheta (\vV_r, \wht{\vV}_r)}, \norm{\sin \vTheta (\vU_r, \wht{\vU}_r)}} 
	\leq 
	\frac{\norm{\vA - \wht{\vA}}}{\sigma_r(\wht{\vA}) - \sigma_{r+1}(\vA)} .
	\end{equation}
	Using \eqref{eq_1}, we have
	\begin{equation} \label{eq_3}
	\max \curlybrackets{\norm{\sin \vTheta (\vV_r, \wht{\vV}_r)}, \norm{\sin \vTheta (\vU_r, \wht{\vU}_r)}} 
	\leq
	\frac{\norm{\vA - \wht{\vA}}}{\sigma_r({\vA}) - \sigma_{r+1}(\vA) - \norm{\vA - \wht{\vA}}} .
	\end{equation}
	The condition $2 \norm{\vA - \wht{\vA}} < \sigma_r(\vA) - \sigma_{r+1}(\vA)$ implies that the RHS of \eqref{eq_3} is nonnegative and strictly smaller than 1, so adding $\norm{\vA - \wht{\vA}}$ to both the numerator and denominator of the RHS of \eqref{eq_3} shows \eqref{eq_WedinModified}.
	
	For the case when $\sigma_r(\wht{\vA}) - \sigma_{r+1}(\vA) \leq 0 $, the same result can be shown using Wedin's perturbation bound, equation \eqref{eq_2} and similar argument.	
\end{proof}

\begin{lemma} \label{lemma_subspacedistance}
	Let $\vU_1, \vU_2 \in \dm{n}{r}$ where $n \geq r$ and $\vU_1^\T \vU_1 = \vU_2^\T \vU_2 = \vI$. Then
	\begin{equation} \label{eq_5}
	\min_{\vO \in O(r)} \norm{\vU_1 - \vU_2 \vO}_F^2 \leq 2 \norm{\sin \vTheta (\vU_1, \vU_2)}_F^2 ,
	\end{equation}
	where $O(r)$ denotes the orthogonal group with dimension $r$, i.e. the set of all $r$-dimensional orthonormal matrices.
\end{lemma}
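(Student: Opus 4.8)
The plan is to bound the Frobenius distance between the two orthonormal bases by the $\sin\vTheta$ distance via an explicit choice of orthogonal matrix $\vO$, obtained from the polar decomposition of $\vU_2^\T \vU_1$. First I would recall the variational characterization: since $\min_{\vO \in O(r)} \norm{\vU_1 - \vU_2 \vO}_F^2 = \norm{\vU_1}_F^2 + \norm{\vU_2}_F^2 - 2\max_{\vO \in O(r)} \tr(\vO^\T \vU_2^\T \vU_1)$, and the maximizer of the trace term over the orthogonal group is achieved at $\vO = \vW\vZ^\T$ where $\vU_2^\T \vU_1 = \vW \vSigma \vZ^\T$ is the SVD, the maximal trace equals $\sum_{i=1}^r \sigma_i(\vU_2^\T \vU_1)$. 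Because both bases are orthonormal ($\norm{\vU_1}_F^2 = \norm{\vU_2}_F^2 = r$), this reduces the left-hand side to $2\parenthesesbig{r - \sum_{i=1}^r \sigma_i(\vU_2^\T \vU_1)}$.

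The key remaining step is to connect the singular values of $\vU_2^\T \vU_1$ to the principal angles. The standard fact here is that the singular values of $\vU_2^\T \vU_1$ are exactly the cosines $\cos\theta_i$ of the principal angles between the two column spaces (this is essentially the definition of principal angles in terms of the cross-Gram matrix, compatible with Definition \ref{def_subspacedistance}). Substituting, the left-hand side becomes $2\sum_{i=1}^r (1 - \cos\theta_i)$. Then I would use the elementary inequality $1 - \cos\theta \leq 1 - \cos^2\theta = \sin^2\theta$, valid because $\cos\theta_i \in [0,1]$ for principal angles $\theta_i \in [0,\pi/2]$, to conclude
\begin{equation}
\min_{\vO \in O(r)} \norm{\vU_1 - \vU_2 \vO}_F^2 \leq 2 \sum_{i=1}^r \sin^2\theta_i = 2 \norm{\sin \vTheta(\vU_1, \vU_2)}_F^2 ,
\end{equation}
which is exactly \eqref{eq_5}.

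The main obstacle, or really the step requiring the most care, is justifying that the singular values of $\vU_2^\T \vU_1$ equal the cosines of the principal angles in a way consistent with the projection-based definition given in Definition \ref{def_subspacedistance}. The definition there states the $\sin\theta_i$ are the singular values of $(\vI - \vPi_{\vU_1})\vPi_{\vU_2}$; I would need to verify that this agrees with the cross-Gram characterization, for instance by noting $\vPi_{\vU_1} = \vU_1 \vU_1^\T$ and computing that the squared singular values of $(\vI - \vU_1\vU_1^\T)\vU_2$ are the eigenvalues of $\vU_2^\T(\vI - \vU_1\vU_1^\T)\vU_2 = \vI - \vU_2^\T\vU_1\vU_1^\T\vU_2$, whose eigenvalues are $1 - \sigma_i(\vU_2^\T\vU_1)^2$. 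This identifies $\sin^2\theta_i = 1 - \sigma_i(\vU_2^\T\vU_1)^2$, hence $\sigma_i(\vU_2^\T\vU_1) = \cos\theta_i$, closing the gap. The trace-maximization fact (von Neumann's trace inequality specialized to the orthogonal group) is standard and I would cite it rather than reprove it.
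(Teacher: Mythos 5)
Your proof is correct and follows essentially the same route as the paper's: both reduce the left-hand side to $2(r - \tr(\vSigma))$ via the SVD of the cross-Gram matrix and trace maximization over $O(r)$, compute $\norm{\sin \vTheta(\vU_1,\vU_2)}_F^2 = r - \tr(\vSigma^2)$ from the projection-based definition, and conclude from $\sigma_i \leq 1$. Your phrasing in terms of principal-angle cosines, $1-\cos\theta_i \leq \sin^2\theta_i$, is just a repackaging of the paper's inequality $\tr(\vSigma) \geq \tr(\vSigma^2)$, and your careful verification that $\sigma_i(\vU_2^\T\vU_1) = \cos\theta_i$ matches the computation in \eqref{eq_11}.
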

\begin{proof}
	Let the singular value decomposition of matrix $\vU_1^\T \vU_2$ be $\vU_1^\T \vU_2 = \vW \vSigma \vV^\T$. For any orthonormal $\vO$, we have
	\begin{equation}
	\begin{split}
	\norm{\vU_1 - \vU_2 \vO}_F^2 
	=& \tr \squarebrackets{(\vU_1 - \vU_2 \vO) (\vU_1^\T - \vO^\T \vU_2^\T)} \\
	=& \tr \squarebrackets{\vU_1 \vU_1^\T + \vU_2 \vU_2^\T - 2 \vO \vU_1^\T \vU_2} \\
	=& 2r - 2 \tr (\vO \vW \vSigma \vV^\T) \\
	=& 2r - 2 \tr (\vSigma \wtd{\vO}) \qquad (\wtd{\vO}:=\vV^\T \vO \vW) \\
	=& 2r - 2 \sum_{i=1}^{r} \vSigma(i,i) \wtd{\vO}(i,i) .
	\end{split}
	\end{equation}
	Following the definition of $\wtd{\vO}$, we see $\wtd{\vO}$ is orthonormal, thus $\wtd{\vO}(i,i) \leq 1,  \forall i \in [r]$. Also note that $\vSigma(i,i) \geq 0$, we have $\sum_{i=1}^{r} \vSigma(i,i) \wtd{\vO}(i,i) \leq \sum_{i}^{r} \vSigma(i,i) = \tr(\vSigma)$, where equality holds when $\wtd{\vO}(i,i)=1, \forall i \in [r]$, i.e. $\wtd{\vO} = \vI$ and $\vO = \vV \vW^\T$. Therefore,
	\begin{equation} \label{eq_10}
	\min_{\vO \in O(r)} \norm{\vU_1 - \vU_2 \vO}_F^2 \leq 2 (r - \tr (\vSigma)) .
	\end{equation}
	
	For the RHS of \eqref{eq_5}, we have
	\begin{equation} \label{eq_11}
	\begin{split}
	\norm{\sin \vTheta (\vU_1, \vU_2)}_F^2 
	= & \norm{(\vI - \vU_1 \vU_1^\T) \vU_2 \vU_2^\T}_F^2 \\
	= & \tr \squarebrackets{\vU_2 \vU_2^\T (\vI - \vU_1 \vU_1^\T) (\vI - \vU_1 \vU_1^\T) \vU_2 \vU_2^\T} \\
	= & \tr \squarebrackets{ (\vI - \vU_1 \vU_1^\T) \vU_2 \vU_2^\T} \\
	= & \tr (\vU_2 \vU_2^\T) - \tr (\vU_1^\T \vU_2 \vU_2^\T \vU_1) \\
	= & r - \tr (\vW \vSigma \vV^\T \vV \vSigma \vW^\T) \\
	= & r - \tr (\vSigma^2) , \\	
	\end{split}
	\end{equation}
	where the first equality follows Definition \ref{def_subspacedistance}. Since $\vSigma(i,i) \leq 1, \forall i\in [r]$, we have $\tr(\vSigma) \geq \tr (\vSigma^2)$. Following this and together with \eqref{eq_10} and \eqref{eq_11}, we can conclude the proof.
\end{proof}

\subsection{Lemmas on Clustering}
In this section, we present some lemmas that can be applied to the clustering analysis in our algorithm.
\begin{definition}[Membership Matrix and Membership Matrix Set]
	We call a matrix $\vA \in \dm{n}{r}$ where $r\leq n$ a \emph{membership matrix} for $n$ points and $r$ clusters if each row of $\vA$ has exactly one element equal to $1$ and $0$'s for the rest of elements. And $A(i,j) = 1$ if and only if ``point $i$ belongs to cluster $j$''.	
	
	We let $\mathbb{M}_{n,r}:= \curlybrackets{\vM \mid \vM \in \dm{n}{r}, \vM \text{ is a membership matrix}}$ denote the set of all membership matrices for $n$ points and $r$ clusters.
\end{definition}
\begin{remark}
	Note that membership matrix is permutation invariant in the sense that, for any permutation matrix $\vQ \in \dm{r}{r}$, membership matrix $\vA$ and $\vA \vQ$ represent the same membership information and the only difference is that $\vQ$ changes the cluster labels. So, the permutation invariance establishes an equivalence relation among the set $\mathbb{M}_{n,r}$.
\end{remark}

\begin{lemma} \label{lemma_3}
	Let $\vX \in \dm{n}{m}$ be an arbitrary matrix with factorization $\vX = \vM_\vX \vC_\vX$, where $\vM_\vX \in \mathbb{M}_{n,r}$ is a membership matrix and $\vC_\vX \in \dm{r}{m}, rank(\vC_\vX)=r$. Let $\Omega_i = \curlybrackets{j \mid \vM_\vX(j,i)=1}, \forall i \in [r]$ and $\vU_r \vSigma_r \vV_r^\T$ be the SVD of $\vX$, which preserves only the first $r$ singular value components. Then, for any $ i \in \Omega_k$ and any $j \in \Omega_l$,
	\begin{equation}
	\norm{\vU_r(i,:) - \vU_r(j,:)} 
	= 
	\begin{cases}
	0 & \text{ if } k=l \\ 
	\sqrt{\frac{1}{|\Omega_k|} + \frac{1}{|\Omega_l|}}& \text{ if } k \neq l 
	\end{cases}.
	\end{equation}
\end{lemma}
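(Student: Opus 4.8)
The plan is to exploit the block-constant row structure of $\vX$, transfer it to $\vU_r$, and then use the orthonormality of the columns of $\vU_r$ to pin down the lengths and inner products of its rows.

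First I would establish that the truncated SVD is exact. Since the columns of $\vM_\vX$ are indicator vectors of disjoint nonempty clusters, they are linearly independent, so $rank(\vM_\vX)=r$; combined with $rank(\vC_\vX)=r$ this gives $rank(\vX)=r$. Hence $\sigma_r(\vX)>0$ and $\sigma_{r+1}(\vX)=0$, so $\vX=\vU_r\vSigma_r\vV_r^\T$ holds exactly and $\vSigma_r$ is invertible. From $\vX\vV_r=\vU_r\vSigma_r$ we obtain $\vU_r=\vX\vV_r\vSigma_r^{\inv}$, whence $\vU_r(i,:)=\vX(i,:)\vV_r\vSigma_r^{\inv}$. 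Because $\vM_\vX(i,:)=\ve_k^\T$ when $i\in\Omega_k$, the row $\vX(i,:)=\vC_\vX(k,:)$ depends only on the cluster index $k$; therefore $\vU_r(i,:)$ is identical for all $i$ in the same cluster. This immediately settles the case $k=l$, where the difference is $0$.

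Next I would encode this structure as $\vU_r=\vM_\vX\vC_\vU$, where $\vC_\vU\in\dm{r}{r}$ collects the $r$ distinct cluster rows $\vu_1,\dots,\vu_r$ (so $\vu_k:=\vU_r(i,:)$ for any $i\in\Omega_k$). The orthonormality $\vU_r^\T\vU_r=\vI$ then reads $\vC_\vU^\T(\vM_\vX^\T\vM_\vX)\vC_\vU=\vI$. Since the clusters are disjoint, $\vM_\vX^\T\vM_\vX=\vD:=\diag(|\Omega_1|,\dots,|\Omega_r|)$, so $(\vD^{1/2}\vC_\vU)^\T(\vD^{1/2}\vC_\vU)=\vI$, i.e.\ $\vO:=\vD^{1/2}\vC_\vU$ is orthogonal. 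Reading off its rows gives $\vu_k=|\Omega_k|^{-1/2}\vO(k,:)$, and since the rows of $\vO$ are orthonormal we get $\norm{\vu_k}^2=1/|\Omega_k|$ and $\vu_k\vu_l^\T=0$ for $k\neq l$. Expanding $\norm{\vu_k-\vu_l}^2=\norm{\vu_k}^2-2\vu_k\vu_l^\T+\norm{\vu_l}^2$ then yields $1/|\Omega_k|+1/|\Omega_l|$, which is the claimed value.

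The computation is short; the one place requiring care is the first step---arguing that $\vU_r$ genuinely inherits the constant-within-cluster structure of $\vX$. The SVD is not unique when singular values repeat, so I would avoid any claim about a particular choice of $\vU_r$ and instead rely only on the identity $\vU_r=\vX\vV_r\vSigma_r^{\inv}$ (valid for \emph{any} truncated SVD once $\vSigma_r$ is invertible) together with $\vU_r^\T\vU_r=\vI$; both hold regardless of how ties among singular values are broken. With those two facts in hand, the remainder is the direct calculation above.
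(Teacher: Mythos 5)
Your proof is correct and follows essentially the same route as the paper's: factor $\vU_r=\vM_\vX\vC$ using the constant-within-cluster rows, then use $\vU_r^\T\vU_r=\vI$ to show the $r$ distinct rows are orthogonal with norms $1/\sqrt{|\Omega_k|}$ and expand the squared distance. Your explicit justification of the within-cluster row equality via $\vU_r=\vX\vV_r\vSigma_r^{\inv}$ is a slightly more careful treatment of SVD non-uniqueness than the paper's terse assertion, but the argument is the same.
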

\begin{proof}
	Since $\vX = \vM_\vX \vC_\vX$ and $\vC_\vX$ has full row rank, we see $rank(\vX)=r$, thus $\vSigma_r \vV_r^\T$ has full row rank as well. From this, we have (i) $\forall k, \forall i,j \in \Omega_k, \vX(i,:) = \vX(j,:)$, so $\vU_r(i,:)=\vU_r(j,:)$; (ii) $\forall k\neq l, \forall i \in \Omega_k, \forall j \in \Omega_l, \vX(i,:) \neq \vX(j,:)$, so $\vU_r(i,:) \neq \vU_r(j,:)$. Therefore, $\vU_r$ has factorization $\vU_r = \vM_\vX \vC$ for some $\vC \in \dm{r}{r}$. In another way, there are only $r$ unique rows in $\vU_r$ with $\vC$ collecting the unique rows and $\vM_\vX$ being the membership matrix shared with $\vX$.
	
	Since $\vI = \vU_r^\T \vU_r = \vC^\vT \vM_\vX^\T \vM_\vX \vC = \vC^\T \diag ([|\Omega_1|, |\Omega_{2}|, \dots, |\Omega_r|]) \vC$, we can see that matrix $\vC$ has orthogonal rows and $\norm{\vC(i,:)} = \frac{1}{\sqrt{|\Omega_i|}}, \forall i \in [r]$. Therefore, $\forall i \in \Omega_k, j \in \Omega_l$,
	\begin{equation}
	\norm{\vU_r(i,:) - \vU_r(j,:)} = \norm{\vC(k,:) - \vC(l,:)}
	= 
	\begin{cases}
	0 & \text{ if } k=l \\ 
	\sqrt{\frac{1}{|\Omega_k|} + \frac{1}{|\Omega_l|}}& \text{ if } k \neq l 
	\end{cases} .
	\end{equation}
\end{proof}

\begin{lemma}[Approximate k-means error bound, Lemma 5.3 in \cite{lei2015consistency}] \label{lemma_kmeans} 
	
	For $\epsilon>0$ and any two matrices $\vU, \bar{\vU} \in \dm{n}{r}$ such that $\bar{\vU}=\bar{\vM} \bar{\vC}$ with $\bar{\vM} \in \mathbb{M}_{n,r}, \bar{\vC} \in \dm{r}{r}$, let $\curlybrackets{{\vM}, {\vC}}$ be a $(1+\epsilon)$ solution to the k-means problem on ${\vU}$:
	\begin{equation*}
	\begin{array}{ll}
	&  {\vM} \in \mathbb{M}_{n,r}, {\vC}\in \dm{r}{r} \\ 
	\textbf{s.t.} 	& \norm{{\vM}{\vC} -  {\vU} }_F^2 \leq (1+\epsilon) \underset{\vM' \in \mathbb{M}_{n,r}, {\vC}'\in \dm{r}{r}}{\min} \norm{{\vM}'{\vC}' -  {\vU} }_F^2\\
	\end{array}.
	\end{equation*}
	Let $\Omega_k=\curlybrackets{i \mid i \in [n], \bar{\vM}(i,k)=1}$ denote the set of all points belonging to cluster k.	For any 
	\begin{equation} \label{eq_15}
	\delta_k \leq \min_{l \neq k} \norm{\bar{\vC}(l,:) - \bar{\vC}(k,:)}, \qquad \forall k\in [r], 
	\end{equation}
	define the set $S_k = \curlybrackets{i \mid i \in \Omega_k, \norm{({\vM}{\vC})(i,:) - \bar{\vU}(i,:)} \geq \delta_k/2}$, then
	\begin{equation} \label{eq_13}
	\sum_{k=1}^{r} |S_k| \delta_k^2 \leq 4(4+2\epsilon) \norm{\bar{\vU} - {\vU}}_F^2 .
	\end{equation}
	Moreover, define $G = \bigcup_{k=1}^r (\Omega_k\backslash S_k)$. If
	\begin{equation} \label{eq_12}
	(16+8\epsilon) \norm{\bar{\vU} - {\vU}}_F^2 / \delta_k^2 < |\Omega_k|, \qquad \forall k \in [r], 
	\end{equation}
	then there exists an $r\times r$ permutation matrix $\vJ$ such that $\bar{\vM} ({G,:}) = \vM({G,:}) \vJ $, i.e. $\vM$ and $\bar{\vM}$ share the same membership information for points in the set G.
\end{lemma}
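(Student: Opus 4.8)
The plan is to prove the two conclusions separately. The bound \eqref{eq_13} follows almost immediately from the approximate-optimality hypothesis, while the membership-matching claim requires a separation argument followed by a counting (pigeonhole) step.

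For \eqref{eq_13}, I would start from the definition of $S_k$: every $i \in S_k$ satisfies $\norm{(\vM\vC)(i,:) - \bar{\vU}(i,:)} \geq \delta_k/2$, so $|S_k|\,\delta_k^2/4 \leq \sum_{i \in S_k} \norm{(\vM\vC)(i,:) - \bar{\vU}(i,:)}^2$. Summing over $k$ and enlarging the index set to all of $[n]$ gives $\sum_{k} |S_k|\,\delta_k^2 \leq 4\,\norm{\vM\vC - \bar{\vU}}_F^2$. It then remains to bound $\norm{\vM\vC - \bar{\vU}}_F$. By the triangle inequality $\norm{\vM\vC - \bar{\vU}}_F \leq \norm{\vM\vC - \vU}_F + \norm{\vU - \bar{\vU}}_F$, and since $\bar{\vU} = \bar{\vM}\bar{\vC}$ is itself a feasible point of the k-means problem on $\vU$, the $(1+\epsilon)$-optimality of $(\vM,\vC)$ gives $\norm{\vM\vC - \vU}_F^2 \leq (1+\epsilon)\norm{\bar{\vU} - \vU}_F^2$. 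Combining these yields $\norm{\vM\vC - \bar{\vU}}_F \leq (1+\sqrt{1+\epsilon})\,\norm{\bar{\vU} - \vU}_F$, and the elementary inequality $(1+\sqrt{1+\epsilon})^2 \leq 4 + 2\epsilon$ closes the bound.

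For the membership-matching claim I would first record that \eqref{eq_13}, combined with hypothesis \eqref{eq_12}, forces $|S_k| \leq (16+8\epsilon)\norm{\bar{\vU}-\vU}_F^2/\delta_k^2 < |\Omega_k|$ for every $k$, so each true cluster contains at least one \emph{good} point in $\Omega_k \backslash S_k$; for such a point $\norm{(\vM\vC)(i,:) - \bar{\vC}(k,:)} < \delta_k/2$ (using $\bar{\vU}(i,:) = \bar{\vC}(k,:)$). The key separation step is then: if two good points $i \in \Omega_k$ and $j \in \Omega_l$ with $k \neq l$ were assigned to the same estimated cluster, then $\vM(i,:) = \vM(j,:)$ forces $(\vM\vC)(i,:) = (\vM\vC)(j,:)$, and the triangle inequality gives $\norm{\bar{\vC}(k,:) - \bar{\vC}(l,:)} < \delta_k/2 + \delta_l/2$; but $\delta_k, \delta_l \leq \norm{\bar{\vC}(k,:) - \bar{\vC}(l,:)}$ by \eqref{eq_15}, a contradiction. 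Hence no estimated cluster receives good points from two distinct true clusters.

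The final step is a counting argument. Letting $T_k \subseteq [r]$ denote the set of estimated-cluster labels occupied by the good points of $\Omega_k$, the separation step shows the $T_k$ are pairwise disjoint while the first step shows each is nonempty; since there are $r$ of them inside the $r$-element label set $[r]$, each $T_k$ must be a singleton $\{\sigma(k)\}$ and $\sigma$ is a bijection of $[r]$. Thus all good points of $\Omega_k$ share the estimated label $\sigma(k)$, and taking $\vJ$ to be the permutation matrix of $\sigma$ gives $\bar{\vM}(G,:) = \vM(G,:)\vJ$. I expect the main obstacle to be presenting this pigeonhole step cleanly: one must carefully derive the strict inequality $|S_k| < |\Omega_k|$ from \eqref{eq_12} to guarantee $\Omega_k \backslash S_k \neq \emptyset$, and verify that the separation argument rules out exactly the cross-cluster collisions, so that ``$r$ disjoint nonempty subsets of an $r$-element set are singletons'' applies without gaps.
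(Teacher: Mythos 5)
The paper does not actually prove this lemma---it is imported verbatim (modulo notation) from Lemma 5.3 of \cite{lei2015consistency}, and your blind reconstruction is correct and follows essentially the same route as the proof in that reference: feasibility of $\curlybrackets{\bar{\vM},\bar{\vC}}$ in the k-means problem plus the triangle inequality and the elementary bound $(1+\sqrt{1+\epsilon})^2 \leq 4+2\epsilon$ give \eqref{eq_13}, and the separation-plus-pigeonhole argument ($r$ pairwise disjoint nonempty label sets $T_k \subseteq [r]$ must all be singletons) gives the permutation $\vJ$. Your bookkeeping of the strict inequalities is also right where it matters---$|S_k| < |\Omega_k|$ from \eqref{eq_12} so every $\Omega_k \backslash S_k$ is nonempty, and the strict bound $\norm{(\vM\vC)(i,:)-\bar{\vU}(i,:)} < \delta_k/2$ for good points is what makes the cross-cluster collision contradiction go through.
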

\begin{remark}
	Lemma \ref{lemma_kmeans} can be used to bound the number of mis-clustered points by k-means. In this lemma, $\vU$ represents the data matrix (possibly dimension reduced) for $n$ data points. One applies k-means to $\vU$ and obtains the membership matrix $\vM$ and cluster centers $\vC$. $\bar{\vU}$ represents the ``clean'' data such that data points within the same true cluster have exactly the same rows and its membership information $\bar{\vM}$ is what one wants to recover and compares with. And its main takeaway is, under \eqref{eq_13}, at least the points in set $G$ can be clustered correctly from any $(1+\epsilon)$ solution of the k-means problem.
\end{remark}

\subsection{Lemmas on Markov Chain Concentration}
In this section, Lemma \ref{lemma_ConcentrationofMC} provides the estimation error bounds for Markov matrix estimation from a single trajectory generated by the Markov chain. Lemma \ref{lemma_ConcentrationofMC_withMistake} analyzes the case when certain states in the trajectory are perturbed. Lemma \ref{lemma_MixingTimeRelation} and Lemma \ref{lemma_VanillaConcentrationofMC} are building blocks towards Lemma \ref{lemma_ConcentrationofMC} and Lemma \ref{lemma_ConcentrationofMC_withMistake}.

\begin{lemma}[Lemma 5 in \cite{zhang2018state}] \label{lemma_MixingTimeRelation}
	Let $\tau(\epsilon)$ be the mixing time of Markov chain given in Definition \ref{def_mixingtime}. For any $\epsilon\leq \delta < 1/2$, we have
	\begin{equation}
	\tau(\epsilon) \leq \tau(\delta) \parenthesesbig{\left \lceil \frac{\log(\epsilon/\delta)}{\log(2\delta)} \right \rceil+1} .
	\end{equation}
\end{lemma}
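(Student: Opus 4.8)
The plan is to reduce the statement to the submultiplicative decay of the deviation-from-stationarity matrix under the induced $\ell_\infty$ operator norm (maximum absolute row sum). First I would introduce the rank-one matrix $\vPi := \onevec \vpi^\T \in \dm{n}{n}$, whose rows are all equal to $\vpi^\T$, and rewrite the quantity appearing in Definition \ref{def_mixingtime} as a matrix norm. Writing $d(k) := \max_{i\in[n]} \frac12 \norm{(\vP^k)(i,:)^\T - \vpi}_1$, the $i$th row of $\vP^k - \vPi$ is exactly $(\vP^k)(i,:) - \vpi^\T$, so $d(k) = \frac12 \norm{\vP^k - \vPi}_\infty$, where $\norm{\cdot}_\infty$ is the maximum absolute row sum. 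Thus $\tau(\epsilon)$ is the least $k$ with $\norm{\vP^k - \vPi}_\infty \leq 2\epsilon$, and the whole lemma reduces to controlling this norm as $k$ grows.

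The key algebraic fact I would establish next is the identity $\parenthesesbig{\vP^k - \vPi}^m = \vP^{km} - \vPi$ for every integer $m \geq 1$. This holds because $\vPi$ is idempotent and absorbing under multiplication by $\vP$: row stochasticity gives $\vP\onevec = \onevec$ and stationarity gives $\vpi^\T \vP = \vpi^\T$, from which one checks $\vP\vPi = \vPi$, $\vPi\vP = \vPi$, and $\vPi^2 = \vPi$ (using $\vpi^\T \onevec = 1$). Expanding $\parenthesesbig{\vP^k - \vPi}^2$ then telescopes to $\vP^{2k} - \vPi$, and induction gives the general power. Combined with submultiplicativity of the operator $\infty$-norm, this yields $\norm{\vP^{km} - \vPi}_\infty \leq \norm{\vP^k - \vPi}_\infty^m$.

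Now I would specialize to $k = \tau(\delta)$, so that $\norm{\vP^{\tau(\delta)} - \vPi}_\infty = 2 d(\tau(\delta)) \leq 2\delta$, and evaluate at time $(m+1)\tau(\delta)$. The displayed chain gives
\begin{equation}
2\, d\parenthesesbig{(m+1)\tau(\delta)} = \normbig{\parenthesesbig{\vP^{\tau(\delta)} - \vPi}^{m+1}}_\infty \leq \normbig{\vP^{\tau(\delta)} - \vPi}_\infty^{m+1} \leq (2\delta)^{m+1} ,
\end{equation}
so that $d\parenthesesbig{(m+1)\tau(\delta)} \leq \delta (2\delta)^m$. To force the right-hand side below $\epsilon$ it suffices that $(2\delta)^m \leq \epsilon/\delta$; since $\delta < 1/2$ gives $2\delta < 1$ and hence $\log(2\delta) < 0$, taking logarithms and dividing (which reverses the inequality) shows this holds as soon as $m \geq \frac{\log(\epsilon/\delta)}{\log(2\delta)}$. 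Choosing $m = \left\lceil \frac{\log(\epsilon/\delta)}{\log(2\delta)}\right\rceil$, a nonnegative integer because $\epsilon \leq \delta$ makes $\log(\epsilon/\delta) \leq 0$, then guarantees $d\parenthesesbig{(m+1)\tau(\delta)} \leq \epsilon$, whence $\tau(\epsilon) \leq (m+1)\tau(\delta) = \tau(\delta)\parenthesesbig{\left\lceil \frac{\log(\epsilon/\delta)}{\log(2\delta)}\right\rceil + 1}$, which is the claim.

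The main obstacle is the second step: pinning down the clean power identity $\parenthesesbig{\vP^k - \vPi}^m = \vP^{km} - \vPi$ and the resulting submultiplicative decay, since this is what converts a one-step contraction at level $\delta$ into geometric sharpening toward $\epsilon$. Everything else --- rewriting the total-variation quantity as a row-sum norm, the logarithm manipulation with its sign flip, and the ceiling/integer bookkeeping --- is routine once submultiplicativity is in hand. I would also flag the edge case $\epsilon = \delta$, where $m = 0$ and the bound degenerates to the trivial $\tau(\epsilon) \leq \tau(\delta)$, confirming consistency.
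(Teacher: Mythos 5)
Your proof is correct, but note that the paper does not actually prove this lemma---it imports it verbatim as Lemma 5 of \cite{zhang2018state}---so there is no in-paper argument to match; yours is a genuine, self-contained replacement. Every step checks out: the identification $d(k) = \tfrac12 \norm{\vP^k - \onevec\vpi^\T}_\infty$ with the induced max-row-sum norm, the absorption relations $\vP\vPi = \vPi\vP = \vPi^2 = \vPi$ (using $\vP\onevec = \onevec$, $\vpi^\T\vP = \vpi^\T$, $\vpi^\T\onevec = 1$) that make $(\vP^k - \vPi)^m = \vP^{km} - \vPi$ telescope, submultiplicativity of the operator norm, and the sign-flip in dividing by $\log(2\delta) < 0$ together with $\epsilon \leq \delta$ ensuring the ceiling is a nonnegative integer. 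You also correctly avoid needing monotonicity of $d(\cdot)$: exhibiting a single time at which $d \leq \epsilon$ suffices since $\tau(\epsilon)$ is defined as a minimum. For comparison, the proof in \cite{zhang2018state} and the standard treatment (e.g., Levin--Peres) route through the standardized distance $\bar{d}(k) = \max_{i,j} \norm{(\vP^k)(i,:)^\T - (\vP^k)(j,:)^\T}_1/2$, which is submultiplicative by a coupling argument, together with the sandwich $d(k) \leq \bar{d}(k) \leq 2d(k)$; that yields the same bound $d\parenthesesbig{(m+1)\tau(\delta)} \leq \delta(2\delta)^m$. Your matrix-analytic version buys a purely algebraic derivation with no coupling construction, at the cost of being specific to finite state spaces where $\vP^k - \vPi$ is a concrete matrix; the coupling proof generalizes to general state spaces. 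One presentational nit: it is worth stating explicitly that ergodicity (assumed of $\vP$ throughout the paper) guarantees $\tau(\delta) < \infty$, so the quantity on the right-hand side is meaningful.
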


\begin{lemma}[Lemma 7 in \cite{zhang2018state}] \label{lemma_VanillaConcentrationofMC}
	Let $\vP\in\dm{n}{n}$ be an ergodic row stochastic matrix with stationary distribution $\vpi\in\dm{n}{1}$. Let $\pi_{\max} = \max_i \vpi_i, \pi_{\min} = \min_i \vpi_i$. Let $\tau(\cdot)$ denote the mixing time of $\vP$, which is given in Definition \ref{def_mixingtime}. Let $\vF = \diag(\vpi) \vP$ denote the stationary frequency matrix. Given a trajectory $X_{0:N}$ of the Markov chain, define $\wht{\vpi}_0\in \dm{1}{n}$ and $\wtd{\vF} \in \dm{n}{n}$ as
	\begin{gather}
	\wht{\vpi}_0(i) = \frac{1}{N} \sum_{t=1}^{N} \indicator \curlybrackets{X_{t-1}=i} , \\
	\wht{\vF}_0(i,j) = \frac{1}{N} \sum_{t=1}^{N} \indicator \curlybrackets{X_{t-1}=i, X_t=j} .
	\end{gather}
	For any $ \epsilon>0$, let $\alpha = \tau(\min(\epsilon/2, \pi_{\max}))+1$, then
	\begin{gather}
	\prob \parenthesesbig{\norm{\wht{\vF}_0 - \vF} \geq \epsilon} \leq 2 \alpha n \exp \parenthesesbig{- \frac{N \epsilon^2 /8}{2 \pi_{\max} \alpha + \epsilon \alpha / 6}} \label{eq_18} , \\
	\prob \parenthesesbig{\norm{\wht{\vpi}_0 - \vpi}_\infty \geq \epsilon} \leq 2 \alpha n \exp \parenthesesbig{- \frac{N \epsilon^2 /8}{2 \pi_{\max} \alpha + \epsilon \alpha / 6}} \label{eq_19}  .
	\end{gather}
\end{lemma}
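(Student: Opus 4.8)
The plan is to control $\wht{\vF}_0 - \vF$ by splitting it into a \emph{bias} term $\expctn[\wht{\vF}_0] - \vF$ and a \emph{fluctuation} term $\wht{\vF}_0 - \expctn[\wht{\vF}_0]$, budgeting roughly $\epsilon/2$ to each. Writing each summand as a rank-one matrix $\vZ_t := \ve_{X_{t-1}} \ve_{X_t}^\T$, we have $\wht{\vF}_0 = \frac{1}{N}\sum_{t=1}^N \vZ_t$ with $\norm{\vZ_t} = 1$. For the bias, the $(i,j)$ entry of $\expctn[\vZ_t]$ is $\prob(X_{t-1}=i,X_t=j) = \vpi_{t-1}(i)\vP(i,j)$, so $\expctn[\wht{\vF}_0] = \diag(\bar{\vpi})\vP$ with $\bar{\vpi} = \frac{1}{N}\sum_{t=1}^N \vpi_{t-1}$ the time-averaged state distribution, and therefore the bias equals $\diag(\bar{\vpi} - \vpi)\vP$. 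The choice $\alpha = \tau(\min(\epsilon/2,\pi_{\max}))+1$ is exactly what forces $\bar{\vpi}$ to lie within $\epsilon/2$ of $\vpi$, since after $\tau(\cdot)$ steps every transient distribution is $\min(\epsilon/2,\pi_{\max})$-close to $\vpi$; this reduces the problem to the fluctuation term.

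For the fluctuation term the obstacle is the Markov dependence among the $\vZ_t$. I would use a \emph{blocking} argument: partition $\{1,\dots,N\}$ into $\alpha$ residue classes modulo $\alpha$, so that within a class consecutive indices are spaced $\alpha \geq \tau(\cdot)+1$ apart and the chain has essentially mixed between them. If the total centered average has norm $\geq \epsilon/2$, then by the triangle inequality at least one of the $\alpha$ class-sums must deviate proportionally, so a union bound over the $\alpha$ classes produces the leading factor $\alpha$. Within a single class the $\approx N/\alpha$ matrices are, up to a mixing error of size $\min(\epsilon/2,\pi_{\max})$, an i.i.d. sample from the stationary law, which lets me invoke the \textbf{matrix Bernstein inequality}. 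The matrix variance is governed by $\expctn[\vZ_t\vZ_t^\T] = \expctn[\ve_{X_{t-1}}\ve_{X_{t-1}}^\T] = \diag(\vpi)$ and $\expctn[\vZ_t^\T\vZ_t] = \diag(\vpi)$, each of spectral norm $\pi_{\max}$; combining the two one-sided statistics yields the $2\pi_{\max}\alpha$ in the denominator (the extra $\alpha$ from the reduced effective sample size $N/\alpha$), while $\norm{\vZ_t}=1$ produces the linear term $\epsilon\alpha/6$ and the ambient dimension supplies the factor $n$ in the prefactor $2\alpha n$. Spending $(\epsilon/2)^2/2 = \epsilon^2/8$ on the fluctuation accounts for the numerator $\epsilon^2/8$.

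The vector bound \eqref{eq_19} follows the same template but is simpler: $\wht{\vpi}_0(i) = \frac{1}{N}\sum_t \indicator\{X_{t-1}=i\}$ is a scalar average, so after the identical blocking one applies the scalar Bernstein inequality coordinatewise and union-bounds over the $n$ states and the $\alpha$ classes; the per-coordinate variance is again at most $\pi_{\max}$ and each summand lies in $[0,1]$, reproducing the same right-hand side as \eqref{eq_18}. When one later needs the threshold expressed purely through $\tau_* = \tau(1/4)$, Lemma \ref{lemma_MixingTimeRelation} converts $\tau(\min(\epsilon/2,\pi_{\max}))$ into a constant multiple of $\tau_*$.

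The main obstacle I expect is the fluctuation step: quantifying precisely how the $\alpha$-spaced subsequence departs from an i.i.d. stationary sample and threading that mixing error through both the centering and the matrix-variance estimate without inflating the constants, so that the clean numerator $\epsilon^2/8$ and denominator $2\pi_{\max}\alpha + \epsilon\alpha/6$ emerge. Controlling the \emph{matrix} (rather than scalar) variance — showing the relevant statistic collapses to $\pi_{\max}$ — is the delicate calculation underlying the spectral-norm bound \eqref{eq_18}, and is where the exact constants in the matrix Bernstein form dictate the final prefactor.
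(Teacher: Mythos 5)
First, a point of comparison: the paper contains no proof of this lemma at all --- it is imported verbatim as Lemma~7 of \cite{zhang2018state} --- so your proposal can only be measured against the proof in that reference, and it does follow the same template used there: a bias/fluctuation split, blocking the trajectory into $\alpha$ interleaved residue classes, and a Bernstein-type concentration bound per class with a union bound over classes. Your variance bookkeeping is also sound in outline: with $\vZ_t = \ve_{X_{t-1}}\ve_{X_t}^\T$ one has $\expctn[\vZ_t\vZ_t^\T] = \diag(\vpi_{t-1})$ and $\expctn[\vZ_t^\T\vZ_t] = \diag(\vpi_t)$, each of spectral norm on the order of $\pi_{\max}$, and with effective sample size $N/\alpha$ per class and deviation budget $\epsilon/2$ the Bernstein arithmetic does reproduce the exponent $\frac{N\epsilon^2/8}{2\pi_{\max}\alpha + \epsilon\alpha/6}$ and the prefactor $2\alpha n$ (a factor $n+n$ from rectangular matrix Bernstein times $\alpha$ classes).

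There are, however, two genuine gaps. (a) The bias step is wrong as stated: the claim that the choice of $\alpha$ ``forces $\bar{\vpi}$ to lie within $\epsilon/2$ of $\vpi$'' ignores the burn-in. The initial distribution is arbitrary, so for the first $\approx \tau(\epsilon/2)$ indices the transient distributions can be at total-variation distance up to $1$ from $\vpi$; the time average only satisfies roughly $\norm{\bar{\vpi}-\vpi}_1 \lesssim \alpha/N + \epsilon$. Since the lemma is asserted for every $N$, you must either discard the first $O(\alpha)$ samples and absorb the $\alpha/N$ correction into the constants, or verify that in the regime where $\alpha/N$ is not negligible the right-hand side already exceeds $1$ and the bound is vacuous. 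Relatedly, the $\pi_{\max}$ term inside $\alpha = \tau(\min(\epsilon/2,\pi_{\max}))+1$ exists precisely so that the nonstationary variance proxies $\diag(\vpi_t)$ stay comparable to $\pi_{\max}$; your sketch gestures at this but never pins it down. (b) More seriously, the within-class step is not a detail to be deferred: the $\alpha$-spaced samples form a Markov chain with kernel $\vP^\alpha$, not an i.i.d.\ sample, and matrix Bernstein for independent summands simply does not apply to them. One needs an explicit device --- a minorization/coupling of the skeleton chain to an i.i.d.\ stationary sequence, or a matrix Freedman/martingale inequality conditioning on each class's past --- and the per-step mixing error $\min(\epsilon/2,\pi_{\max})$ must then be threaded through both the centering and the variance bound without inflating the constants. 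You candidly flag this as ``the main obstacle I expect,'' which is exactly the problem: the sketch halts at the step that constitutes the actual content of the proof, so as written it is an accurate plan rather than a proof.
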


\begin{lemma} \label{lemma_ConcentrationofMC}
	Consider the Markov chain given in Lemma \ref{lemma_VanillaConcentrationofMC} and its trajectory $X_{0:N}$. Define $\tau_* = \tau(1/4)$ and $\wht{\vP}_0 \in \dm{n}{n}$ as
	\begin{equation} \label{eq_22}
	\wht{\vP}_0(i,j) = 
	\begin{cases}
	\frac{\sum_{t=1}^{N} \indicator \curlybrackets{X_{t-1}=i, X_t=j}}{\sum_{t=1}^{N} \indicator \curlybrackets{X_{t-1}=i}} & \text{ if } \sum_{t=1}^{N} \indicator \curlybrackets{X_{t-1}=i} \neq 0\\ 
	1/n & \text{ o.w. } 
	\end{cases} ,
	\end{equation}
	For any $ \epsilon>0$, let $\tilde{\epsilon} = \min \curlybrackets{\pi_{\min}/2, \epsilon}$, then we have
	\begin{equation}
	\prob \parenthesesbig{\norm{\wht{\vP}_0 - \vP} \leq 4 \pi_{\min}^{-1} \norm{\vP} \epsilon} 
	\geq  
	1-24 n \tau_* \log(\tilde{\epsilon}^{-1}) 
	\exp \parenthesesbig{- \frac{N}{100 \tau_* \pi_{\max} \log(\tilde{\epsilon}^{-1}) \tilde{\epsilon}^{-2} }} .
	\end{equation}
\end{lemma}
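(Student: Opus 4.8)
The plan is to reduce the transition-matrix error $\wht{\vP}_0 - \vP$ to the two quantities that Lemma~\ref{lemma_VanillaConcentrationofMC} already controls --- the frequency-matrix error $\wht{\vF}_0 - \vF$ and the occupation-measure error $\wht{\vpi}_0 - \vpi$ --- and then to convert the threshold-dependent mixing time appearing in that lemma into the fixed quantity $\tau_*$ using Lemma~\ref{lemma_MixingTimeRelation}. For the algebra, write $\vD = \diag(\vpi)$ and $\wht{\vD} = \diag(\wht{\vpi}_0)$, so that $\vF = \vD\vP$ (hence $\vP = \vD^{-1}\vF$) and, on the event that every state is visited, $\wht{\vP}_0 = \wht{\vD}^{-1}\wht{\vF}_0$. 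Using $(\wht{\vD}^{-1} - \vD^{-1}) = \wht{\vD}^{-1}(\vD - \wht{\vD})\vD^{-1}$ together with $\vD^{-1}\vF = \vP$, I would decompose
\begin{equation}
\wht{\vP}_0 - \vP = \wht{\vD}^{-1}(\wht{\vF}_0 - \vF) + \wht{\vD}^{-1}(\vD - \wht{\vD})\vP ,
\end{equation}
whence $\norm{\wht{\vP}_0 - \vP} \leq \norm{\wht{\vD}^{-1}}\parenthesesbig{\norm{\wht{\vF}_0 - \vF} + \norm{\vD - \wht{\vD}}\,\norm{\vP}}$.

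Next I would introduce the good event. Let $\tilde\epsilon = \min\curlybrackets{\pi_{\min}/2, \epsilon}$ and let $E$ be the event on which simultaneously $\norm{\wht{\vF}_0 - \vF} \leq \tilde\epsilon$ and $\norm{\wht{\vpi}_0 - \vpi}_\infty \leq \tilde\epsilon$. On $E$ we have $\wht{\vpi}_0(i) \geq \pi_{\min} - \tilde\epsilon \geq \pi_{\min}/2 > 0$ for every $i$ (here ergodicity gives $\pi_{\min}>0$), so every state is visited, the $1/n$ fallback in \eqref{eq_22} never triggers, the decomposition above is valid, and moreover $\norm{\wht{\vD}^{-1}} \leq 2/\pi_{\min}$ and $\norm{\vD - \wht{\vD}} = \norm{\wht{\vpi}_0 - \vpi}_\infty \leq \tilde\epsilon$. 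Since $\vP$ is row stochastic, $\vP\onevec = \onevec$ forces $\norm{\vP} \geq 1$, so on $E$,
\begin{equation}
\norm{\wht{\vP}_0 - \vP} \leq \frac{2}{\pi_{\min}}\parenthesesbig{\tilde\epsilon + \tilde\epsilon\norm{\vP}} \leq \frac{4\tilde\epsilon\norm{\vP}}{\pi_{\min}} \leq \frac{4\epsilon\norm{\vP}}{\pi_{\min}} ,
\end{equation}
which is exactly the deterministic bound claimed. It then remains only to lower bound $\prob(E)$.

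Finally, by a union bound $\prob(E^c) \leq \prob(\norm{\wht{\vF}_0 - \vF} > \tilde\epsilon) + \prob(\norm{\wht{\vpi}_0 - \vpi}_\infty > \tilde\epsilon)$, and applying Lemma~\ref{lemma_VanillaConcentrationofMC} with its parameter set to $\tilde\epsilon$ bounds each of the two terms by
\begin{equation}
2\alpha n \exp\parenthesesbig{-\frac{N\tilde\epsilon^2/8}{2\pi_{\max}\alpha + \tilde\epsilon\alpha/6}}, \qquad \alpha = \tau(\min(\tilde\epsilon/2, \pi_{\max})) + 1 .
\end{equation}
The main obstacle --- and the only genuinely delicate part --- is replacing this $\alpha$ by the fixed $\tau_*$ while producing the advertised factor $\log(\tilde\epsilon^{-1})$ and the precise constants $24$ and $100$. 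I would first note that $\tilde\epsilon/2 \leq \pi_{\min}/4 \leq \pi_{\max}$, so the inner minimum equals $\tilde\epsilon/2$; then apply Lemma~\ref{lemma_MixingTimeRelation} with $\delta = 1/4$ (so $\tau(\delta) = \tau_*$ and $\log(2\delta) = -\log 2$), which gives $\tau(\tilde\epsilon/2) \leq \tau_*\lceil \log_2(\tilde\epsilon^{-1})\rceil$ after the ceiling simplifies the $-1$ inside. Using $\tilde\epsilon \leq \pi_{\min}/2 \leq 1/2$ (hence $\log_2(\tilde\epsilon^{-1}) \geq 1$) and $\tau_* \geq 1$ to absorb the ceiling and the trailing $+1$ yields $\alpha \leq 6\tau_*\log(\tilde\epsilon^{-1})$. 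Bounding the exponent's denominator by $2\pi_{\max}\alpha + \tilde\epsilon\alpha/6 \leq \tfrac{25}{12}\pi_{\max}\alpha$ (since $\tilde\epsilon \leq \pi_{\max}/2$) and substituting the bound on $\alpha$ then collects into the stated probability bound with prefactor $24n\tau_*\log(\tilde\epsilon^{-1})$ and exponent constant $100$; the margin here is tight, so I would keep track of each constant carefully. Everything outside this constant-chasing --- the matrix manipulation and the union bound --- is routine.
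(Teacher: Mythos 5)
Your proposal is correct and follows essentially the same route as the paper's own proof: the identity $\wht{\vP}_0-\vP=\wht{\vD}^{-1}(\wht{\vF}_0-\vF)+\wht{\vD}^{-1}(\vD-\wht{\vD})\vP$ is exactly the paper's decomposition $\wht{\vP}_0=\diag(\wht{\vpi}_0)^{-1}\wht{\vF}_0$ split against $\vP=\diag(\vpi)^{-1}\vF$, and your good event, union bound over Lemma \ref{lemma_VanillaConcentrationofMC}, use of $\norm{\vP}\geq 1$, and conversion $\alpha\leq 6\tau_*\log(\tilde{\epsilon}^{-1})$ via Lemma \ref{lemma_MixingTimeRelation} with $\delta=1/4$ reproduce the paper's constants $24$ and $100$. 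The only cosmetic difference is that you work with $\tilde{\epsilon}$ throughout, whereas the paper first restricts $\epsilon<\pi_{\min}/2$ and lifts the restriction at the end.
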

\begin{proof}
	For simplicity, restrict $\epsilon<\pi_{\min}/2$ for now. From Lemma \ref{lemma_VanillaConcentrationofMC}, we have the concentration results of $\wht{\vF}_0, \wht{\vpi}_0$ given in \eqref{eq_18} and \eqref{eq_19}, we will first simplify them before applying them to this proof.
	
	Since $\alpha = \tau(\min(\epsilon/2, \pi_{\max}))+1$ in Lemma \ref{lemma_VanillaConcentrationofMC} and we restrict $\epsilon<\pi_{\min}/2$ for now, we see $\alpha = \tau(\epsilon/2)+1$. We can obtain an upper bound on $\alpha$ as follows:
	\begin{equation} \label{eq_21}
	\begin{split}
	\alpha 
	&= \tau(\epsilon/2)+1 \\
	&\overset{\text{(i)}}{\leq}  \parenthesesbig{\left \lceil \frac{\log(2\epsilon)}{\log(1/2)} \right \rceil+1} \tau_* + 1 \\
	&\overset{\text{(ii)}}{\leq}  \parenthesesbig{\left ( \frac{\log(2\epsilon)}{\log(1/2)} \right )+3} \tau_*\\
	&= \parenthesesbig{\log_2 (0.5 \epsilon^{-1})+3} \tau_* \\	
	&\leq \parenthesesbig{\log_2 (\epsilon^{-1})+3} \tau_* \\	
	&\overset{\text{(iii)}}{\leq} 4\tau_* \log_2 (\epsilon^{-1}) \\
	&\leq 6\tau_* \log (\epsilon^{-1}) ,\\	
	\end{split}
	\end{equation}
	where: (i) holds since $\epsilon/2<\pi_{\min}/4\leq1/(4n)\leq1/4$ so we could apply Lemma \ref{lemma_MixingTimeRelation}; (ii) holds since mixing time $\tau_*\geq 1$; (iii) holds since $\epsilon^{-1} > 2 \pi_{\min}^{-1} \geq 2n \geq 2$ thus $\log_2(\epsilon^{-1})\geq 1$. Plugging \eqref{eq_21} into the RHS of \eqref{eq_18} and \eqref{eq_19}, we have
	\begin{equation} \label{eq_23}
	\begin{split}
	&2 \alpha n \exp \parenthesesbig{- \frac{N \epsilon^2 /8}{2 \pi_{\max} \alpha + \epsilon \alpha / 6}} \\
	\leq &12 n \tau_* \log(\epsilon^{-1}) \exp \parenthesesbig{- \frac{N}{8 \tau_* \log (\epsilon^{-1}) \epsilon^{-1} (12 \pi_{\max} \epsilon^{-1} + 1) }} \\
	\leq &12 n \tau_* \log(\epsilon^{-1}) \exp \parenthesesbig{- \frac{N}{100 \tau_* \pi_{\max} \log (\epsilon^{-1}) \epsilon^{-2}}} ,
	\end{split}
	\end{equation}
	where the last line holds since $0.5 \pi_{\max} \epsilon^{-1} > \pi_{\max} / \pi_{\min} \geq 1$. 
	
	In the remainder of the proof, we require the conditions $\norm{\wht{\vF}_0 - \vF} \leq \epsilon, \norm{\wht{\vpi}_0 - \vpi}_\infty \leq \epsilon$ to be satisfied. By applying union bound to \eqref{eq_18} and \eqref{eq_19} and plugging in \eqref{eq_23}, we can see
	\begin{equation} \label{eq_25}
	\prob \parenthesesbig{\norm{\wht{\vF}_0 - \vF} \leq \epsilon, \norm{\wht{\vpi}_0 - \vpi}_\infty \leq \epsilon} \geq 1 - 24 n \tau_* \log(\epsilon^{-1}) \exp \parenthesesbig{- \frac{N}{100 \tau_* \pi_{\max} \log (\epsilon^{-1}) \epsilon^{-2}}} .
	\end{equation}
	
	When $\norm{\wht{\vpi}_0 - \vpi}_\infty \leq \epsilon < \pi_{\min}/2 $, it is easy to see $\min_i \wht{\vpi}_0(i) \geq \pi_{\min}/2>0$, which implies every state has showed up at least once in the trajectory since otherwise there would be $0$ element in $\wht{\vpi}_0$. Also, by the definition of $\wht{\vF}_0, \wht{\vpi}_0, \wht{\vP}_0$, we can see $\wht{\vP}_0$ is determined only by the first line of \eqref{eq_22} thus $\wht{\vP}_0 = \diag (\wht{\vpi}_0)^{-1} \wht{\vF}_0$ holds. Now, consider $\norm{\wht{\vP}_0 - \vP}$, we have
	\begin{equation} \label{eq_24}
	\begin{split}
	\norm{\wht{\vP}_0 - \vP}
	&= \norm{\diag(\wht{\vpi}_0)^{-1} \wht{\vF}_0 - \diag(\vpi)^{-1} \vF} \\
	&\leq \norm{\diag(\wht{\vpi}_0)^{-1} (\wht{\vF}_0 - \vF)} + \norm{\parenthesesbig{\diag(\wht{\vpi}_0)^{-1} - \diag (\vpi)^{-1}} \vF}\\
	&\leq \norm{\diag(\wht{\vpi}_0)^{-1}} \norm{(\wht{\vF}_0 - \vF)} + \norm{\vI - \diag (\vpi / \wht{\vpi}_0)} \norm{\diag(\vpi)^{-1} \vF} \\
	&= \parenthesesbig{\underset{i}{\min} \wht{\vpi}_0(i)}^{-1} \norm{(\wht{\vF}_0 - \vF)} + \underset{i}{\max} \frac{|\wht{\vpi}_0(i) - \vpi_i|}{\wht{\vpi}_0(i)} \norm{\vP} \\
	&\leq 2 \pi_{\min}^{-1} \norm{(\wht{\vF}_0 - \vF)} + \frac{\max_i |\wht{\vpi}_0(i) - \vpi_i|}{\min_j \wht{\vpi}_0(j)} \norm{\vP} \\
	&\overset{\text{(i)}}{\leq} 2 \pi_{\min}^{-1} \epsilon + 2 \pi_{\min}^{-1} \epsilon \norm{\vP} \\
	&\overset{\text{(ii)}}{\leq} 4 \pi_{\min}^{-1} \norm{\vP} \epsilon ,
	\end{split}
	\end{equation}
	where: (i) holds as $\norm{\wht{\vF}_0 - \vF} \leq \epsilon, \norm{\wht{\vpi}_0 - \vpi}_\infty \leq \epsilon$ and (ii) holds as $\norm{\vP} \geq \norm{\vP \frac{1}{\sqrt{n} \onevec }} = \norm{\frac{1}{\sqrt{n} \onevec }} = 1$. Therefore, for any $\epsilon \leq \pi_{\min}/2$, by combining \eqref{eq_24} and \eqref{eq_25}, we have
	\begin{equation}
	\prob \parenthesesbig{\norm{\wht{\vP}_0 - \vP} \leq 4 \pi_{\min}^{-1} \norm{\vP} \epsilon} 
	\geq  
	1-24 n \tau_* \log({\epsilon}^{-1}) 
	\exp \parenthesesbig{- \frac{N}{100 \tau_* \pi_{\max} \log({\epsilon}^{-1}) {\epsilon}^{-2} }} .
	\end{equation}
	Finally, we could remove the restriction $\epsilon \leq \pi_{\min}/2$. We have, for any $\epsilon>0$ and let $\tilde{\epsilon} = \min \curlybrackets{\epsilon, \pi_{\min}/2}$, then
	\begin{equation}
	\begin{split}
	\prob \parenthesesbig{\norm{\wht{\vP}_0 - \vP} \leq 4 \pi_{\min}^{-1} \norm{\vP} \epsilon} 
	&\geq
	\prob \parenthesesbig{\norm{\wht{\vP}_0 - \vP} \leq 4 \pi_{\min}^{-1} \norm{\vP} \tilde{\epsilon}} \\
	&\geq 	
	1-24 n \tau_* \log({\tilde{\epsilon}}^{-1}) 
	\exp \parenthesesbig{- \frac{N}{100 \tau_* \pi_{\max} \log(\tilde{\epsilon}^{-1}) \tilde{\epsilon}^{-2} }} ,
	\end{split}
	\end{equation}	
	which concludes the proof.
\end{proof}

\begin{lemma} \label{lemma_ConcentrationofMC_withMistake}
	Consider all the conditions given in Lemma \ref{lemma_VanillaConcentrationofMC} and \ref{lemma_ConcentrationofMC} except that there are $N'$ perturbations in the trajectory of Markov chain, i.e. $\sum_{t=0}^{N} \indicator \curlybrackets{\wht{X}_t \neq X_t} = N'$, where $\wht{X}_{0:N}$ denotes the perturbed trajectory. Let
	\begin{equation}
	\wht{\vP}(i,j) = 
	\begin{cases}
	\frac{\sum_{t=1}^{N} \indicator \curlybrackets{\wht{X}_{t-1}=i, \wht{X}_t=j}}{\sum_{t=1}^{N} \indicator \curlybrackets{\wht{X}_{t-1}=i}} & \text{ if } \sum_{t=1}^{N} \indicator \curlybrackets{\wht{X}_{t-1}=i} \neq 0\\ 
	1/n & \text{otherwise} .
	\end{cases}
	\end{equation}
	Then, when $N'<\frac{N \pi_{\min}}{2}$, $\forall \epsilon>0$, let $\tilde{\epsilon} = \min \curlybrackets{\pi_{\min}/2 - N'/N, \epsilon}$, we have
	\begin{equation}
	\prob \parenthesesbig{\norm{\wht{\vP} - \vP} \leq 4 \pi_{\min}^{-1} \norm{\vP} (\epsilon + 1.5N'/N)} 
	\geq  
	1-24 n \tau_* \log(\tilde{\epsilon}^{-1}) 
	\exp \parenthesesbig{- \frac{N}{100 \tau_* \pi_{\max} \log(\tilde{\epsilon}^{-1}) \tilde{\epsilon}^{-2} }} .
	\end{equation}
\end{lemma}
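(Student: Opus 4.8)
The plan is to reduce to Lemma \ref{lemma_ConcentrationofMC} by comparing the empirical statistics of the perturbed trajectory $\wht{X}_{0:N}$ with those of the true trajectory $X_{0:N}$. Write $\wht{\vF}(i,j) = \frac1N\sum_t \indicator\curlybrackets{\wht X_{t-1}=i,\wht X_t=j}$ and $\wht{\vpi}(i)=\frac1N\sum_t\indicator\curlybrackets{\wht X_{t-1}=i}$ for the perturbed frequency and visitation quantities, and let $\wht{\vF}_0,\wht{\vpi}_0$ denote the analogous quantities built from $X_{0:N}$, exactly as in Lemma \ref{lemma_VanillaConcentrationofMC}. The only genuinely new ingredient is a \emph{deterministic} bound on how far the perturbed statistics can drift from the unperturbed ones; everything else is the concentration and matrix-algebra argument already carried out in Lemma \ref{lemma_ConcentrationofMC}.

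First I would establish the deterministic perturbation estimates $\norm{\wht{\vF}-\wht{\vF}_0}\le 2N'/N$ and $\norm{\wht{\vpi}-\wht{\vpi}_0}_\infty\le N'/N$. I would prove these by a telescoping argument: order the $N'$ positions at which $\wht X$ and $X$ differ and flip them one at a time, producing a chain $X=Y^{(0)},Y^{(1)},\dots,Y^{(N')}=\wht X$ in which $Y^{(k)}$ and $Y^{(k-1)}$ differ in a single position. A single such flip alters only the (at most two) transition pairs adjacent to the flipped position, so it changes the scaled count matrix $N\wht{\vF}$ by a matrix with at most four $\pm1$ entries, whose Frobenius---hence spectral---norm is at most $2$; summing the $N'$ single-flip updates by the triangle inequality gives $\norm{N\wht{\vF}-N\wht{\vF}_0}\le 2N'$. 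The same bookkeeping for the visitation counts, where each flip moves one unit of mass between two coordinates, yields $\norm{\wht{\vpi}-\wht{\vpi}_0}_\infty\le N'/N$.

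Next I would run the concentration step. Restricting first to $\epsilon\le\pi_{\min}/2-N'/N$ (a nonempty range precisely because $N'<N\pi_{\min}/2$), I would apply Lemma \ref{lemma_VanillaConcentrationofMC} to the true trajectory to obtain $\norm{\wht{\vF}_0-\vF}\le\epsilon$ and $\norm{\wht{\vpi}_0-\vpi}_\infty\le\epsilon$ on an event of the stated probability; here the mixing-time simplification of $\alpha$ and the union bound are verbatim those in \eqref{eq_21}--\eqref{eq_25}, and since $\epsilon$ is only smaller than in Lemma \ref{lemma_ConcentrationofMC} every inequality used there still holds. On this event the triangle inequality combined with the deterministic estimates gives $\norm{\wht{\vF}-\vF}\le\epsilon+2N'/N$ and $\norm{\wht{\vpi}-\vpi}_\infty\le\epsilon+N'/N$; crucially the latter forces $\min_i\wht{\vpi}(i)\ge\pi_{\min}-\epsilon-N'/N\ge\pi_{\min}/2>0$, so every state is visited in $\wht X$ and $\wht{\vP}=\diag(\wht{\vpi})^{-1}\wht{\vF}$.

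Finally I would repeat the factorization bound \eqref{eq_24} with $\wht{\vF},\wht{\vpi}$ in place of $\wht{\vF}_0,\wht{\vpi}_0$, obtaining $\norm{\wht{\vP}-\vP}\le 2\pi_{\min}^{-1}\norm{\wht{\vF}-\vF}+2\pi_{\min}^{-1}\norm{\wht{\vpi}-\vpi}_\infty\norm{\vP}$; plugging in the two combined bounds and using $\norm{\vP}\ge1$ collapses the right-hand side to $4\pi_{\min}^{-1}\norm{\vP}(\epsilon+1.5N'/N)$, which is the claim for $\epsilon\le\pi_{\min}/2-N'/N$. I would then remove the restriction on $\epsilon$ by the monotonicity trick used at the end of Lemma \ref{lemma_ConcentrationofMC}, replacing $\epsilon$ with $\tilde{\epsilon}=\min\curlybrackets{\pi_{\min}/2-N'/N,\epsilon}$ inside the probability statement. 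The main obstacle is the spectral-norm estimate $\norm{\wht{\vF}-\wht{\vF}_0}\le 2N'/N$: a naive count (each perturbed position touches two transitions) loses a factor of two and would only yield the weaker constant $2.5N'/N$, so the sharp constant $1.5$ is recovered only through the one-flip-at-a-time decomposition that charges each perturbation a single Frobenius-norm-$2$ update; keeping $\min_i\wht{\vpi}(i)$ bounded away from zero is the secondary point that the hypothesis $N'<N\pi_{\min}/2$ is designed to guarantee.
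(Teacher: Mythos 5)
Your proposal follows the paper's proof almost verbatim in structure: a deterministic bound on the drift of the perturbed statistics $\wht{\vF},\wht{\vpi}$ from the clean ones $\wht{\vF}_0,\wht{\vpi}_0$, the concentration event from Lemma \ref{lemma_VanillaConcentrationofMC} with the identical $\alpha$ simplification and union bound, the observation that $\epsilon\leq\pi_{\min}/2-N'/N$ forces $\min_i\wht{\vpi}(i)\geq\pi_{\min}/2>0$ so that $\wht{\vP}=\diag(\wht{\vpi})^{-1}\wht{\vF}$, the factorization bound \eqref{eq_24}, and the final $\tilde{\epsilon}$ monotonicity trick. The one place you depart from the paper is your telescoping justification of $\norm{\wht{\vF}-\wht{\vF}_0}\leq 2N'/N$, and that step fails as stated. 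A single flip of $X_t$ from $a$ to $b$ changes the count matrix by $-1$ at positions $(X_{t-1},a)$ and $(a,X_{t+1})$ and $+1$ at $(X_{t-1},b)$ and $(b,X_{t+1})$, but these four positions need not be distinct: if $X_{t-1}=X_{t+1}=a$ (flipping the middle of $a,a,a$), the two $-1$'s land on the same entry $(a,a)$, so the per-flip difference, restricted to the $\{a,b\}\times\{a,b\}$ block, is $\left[\begin{smallmatrix}-2&1\\1&0\end{smallmatrix}\right]$, with Frobenius norm $\sqrt{6}>2$ and spectral norm $1+\sqrt{2}>2$ (eigenvalues $-1\pm\sqrt{2}$). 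Worse, the deterministic claim itself is false, not just your proof of it: take $X$ identically $a$ and flip $N'$ interior positions to $b$, spaced at least two apart; the per-flip differences are identical and add, giving $\norm{\wht{\vF}-\wht{\vF}_0}=(1+\sqrt{2})N'/N>2N'/N$.

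In fairness, the paper's own step (i) in \eqref{eq_46} has the same factor-of-two optimism: each of the at most $2N'$ affected transition time-steps changes \emph{two} entries of the count matrix, so the entrywise $\ell_1$ sum is only bounded by $4N'$, not the $2N'$ the paper asserts (your remark that the ``naive'' count gives $2.5N'/N$ is exactly the honest version of the paper's chain). So your telescoping was a sensible attempt to repair precisely this looseness, but the coincidence case blocks recovery of the constant $2$; the best per-flip charge is $1+\sqrt{2}$ in spectral norm, which propagates through your (and the paper's) otherwise-correct remaining steps to yield $\norm{\wht{\vP}-\vP}\leq 4\pi_{\min}^{-1}\norm{\vP}\bigl(\epsilon+\tfrac{2+\sqrt{2}}{2}\,N'/N\bigr)$, i.e.\ constant $\approx 1.71$ in place of the stated $1.5$. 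Everything else in your argument — the restriction and its removal, the probability bookkeeping, the use of $\norm{\vP}\geq 1$ — matches the paper and is sound; the gap is confined to this constant-level miscount, which the lemma as published shares.
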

\begin{proof}
	For now, assume $\epsilon<\pi_{\min}/2 - N'/N$. Let $\wht{\vpi}_0, \wht{\vF}_0, \wht{\vP}_0$ be defined the same as Lemma \ref{lemma_VanillaConcentrationofMC} and \ref{lemma_ConcentrationofMC} with the unperturbed trajectory $X_{0:N}$, then according to the proof of Lemma \ref{lemma_ConcentrationofMC}, we have
	\begin{multline} \label{eq_44}
	\prob \parenthesesbig{\norm{\wht{\vF}_0 - \vF} \leq \epsilon, \norm{\wht{\vpi}_0 - \vpi}_\infty \leq \epsilon, \norm{\wht{\vP}_0 - \vP} \leq 4 \pi_{\min}^{-1} \norm{\vP} \epsilon } 
	\geq \\
	1 - 24 n \tau_* \log(\epsilon^{-1}) \exp \parenthesesbig{- \frac{N}{100 \tau_* \pi_{\max} \log (\epsilon^{-1}) \epsilon^{-2}}} .
	\end{multline}
	
	Let $\wht{\vpi}(i) = \frac{1}{N} \sum_{t=1}^{N} \indicator \curlybrackets{\wht{X}_{t-1}=i}, \wht{\vF}(i,j) = \frac{1}{N} \sum_{t=1}^{N} \indicator \curlybrackets{\wht{X}_{t-1}=i, \wht{X}_t=j}$. Then, for any $i \in [n]$, we can see
	\begin{equation}
	\begin{split}
	\wht{\vpi}(i) 
	&= \frac{1}{N} \sum_{t=1}^N \indicator \curlybrackets{\wht{X}_{t-1}=i} \\
	&= \frac{1}{N} \parenthesesbig{
		\sum_{t=1}^N \indicator \curlybrackets{{X}_{t-1}=i} + \sum_{t=1}^N \indicator \curlybrackets{{X}_{t-1}\neq i, \wht{X}_{t-1}=i} - \sum_{t=1}^N \indicator \curlybrackets{{X}_{t-1} = i, \wht{X}_{t-1}\neq i}
	} \\
	&= \wht{\vpi}_0(i) + \frac{1}{N} \parenthesesbig{
		\sum_{t=1}^N \indicator \curlybrackets{{X}_{t-1}\neq i, \wht{X}_{t-1}=i} - \sum_{t=1}^N \indicator \curlybrackets{{X}_{t-1} = i, \wht{X}_{t-1}\neq i}
	}, 
	\end{split}
	\end{equation}
	which gives
	\begin{equation}
	|\wht{\vpi}(i) - \wht{\vpi}_0(i)| \leq \frac{N'}{N} .
	\end{equation}
	Then, with probability no less than the bound given in \eqref{eq_44}, we have
	\begin{equation} \label{eq_45}
	\wht{\vpi}(i) \geq \wht{\vpi}_0(i) - \frac{N'}{N} \geq \vpi(i) - \epsilon - \frac{N'}{N} \geq \frac{\pi_{\min}}{2} >0 ,
	\end{equation}
	\begin{equation} \label{eq_47}
	|\wht{\vpi}(i) - \vpi(i)| \leq |\wht{\vpi}(i) - \wht{\vpi}_0(i)| + |\wht{\vpi}_0 (i) - \vpi(i) | \leq \frac{N'}{N} + \epsilon ,
	\end{equation}
	and
	\begin{equation} \label{eq_46}
	\begin{split}	
	\norm{\wht{\vF} - \vF}
	& \leq \norm{\wht{\vF} - \wht{\vF}_0} + \norm{\wht{\vF}_0 - \vF} \\
	& \leq \norm{\wht{\vF} - \wht{\vF}_0}_F + \epsilon \\
	& = \frac{1}{N} \sqrt{ \sum_{i,j\in [n]} \parenthesesbig{
			\sum_{t=1}^{N} \indicator \curlybrackets{\wht{X}_{t-1}=i, \wht{X}_{t}=j} 
			-
			\sum_{t=1}^{N} \indicator \curlybrackets{{X}_{t-1}=i, {X}_{t}=j}
		}^2} + \epsilon \\
	& \leq \frac{1}{N} \sqrt{ \parenthesesbig{ \sum_{i,j\in [n]} 
			\left| \sum_{t=1}^{N} \indicator \curlybrackets{\wht{X}_{t-1}=i, \wht{X}_{t}=j} 
			-
			\sum_{t=1}^{N} \indicator \curlybrackets{{X}_{t-1}=i, {X}_{t}=j} \right|
		}^2} + \epsilon \\
	& \overset{\text{(i)}}{\leq} \frac{1}{N} \sqrt{(2 N')^2} + \epsilon \\
	& = \frac{2N'}{N} + \epsilon ,
	\end{split}
	\end{equation}
	where (i) holds since $N'$ perturbations in the trajectory can at most ruin $2N'$ transition pairs in total.
	
	Since \eqref{eq_45} guarantees for any $i\in [n], \wht{\vpi}(i) > 0$, similar to the derivation in \eqref{eq_24}, we have
	\begin{equation}
	\norm{\wht{\vP} - \vP} \leq \parenthesesbig{\min_i \wht{\vpi}(i)}^{-1} \norm{\wht{\vF} - \vF} + \frac{\max_i |\wht{\vpi}(i) - \vpi(i)|}{\min_i \wht{\vpi}(i)} \norm{\vP} .
	\end{equation}
	Combining \eqref{eq_45}, \eqref{eq_47} and \eqref{eq_46}, we have
	\begin{equation}
	\begin{split}
	\norm{\wht{\vP} - \vP}
	& \leq 2 \pi_{\min}^{-1} (\frac{2N'}{N} + \epsilon) + 2 \pi_{\min}^{-1} (\frac{N'}{N} + \epsilon) \norm{\vP} \\
	& \leq 2 \pi_{\min}^{-1} (\frac{2N'}{N} + \epsilon) \norm{\vP} + 2 \pi_{\min}^{-1} (\frac{N'}{N} + \epsilon) \norm{\vP} \\
	& \leq 4 \pi_{\min}^{-1} (\frac{3N'}{2N} + \epsilon) \norm{\vP} .
	\end{split}
	\end{equation}
	So,
	\begin{equation}
	\prob \parenthesesbig{\norm{\wht{\vP} - \vP} \leq 4 \pi_{\min}^{-1} \norm{\vP} (\epsilon + 1.5N'/N)} 
	\geq  
	1-24 n \tau_* \log({\epsilon}^{-1}) 
	\exp \parenthesesbig{- \frac{N}{100 \tau_* \pi_{\max} \log({\epsilon}^{-1}) {\epsilon}^{-2} }} .
	\end{equation}
	Finally, we could remove the restriction $\epsilon<\pi_{\min}/2 - N'/N$. For any $\epsilon>0$, let $\tilde{\epsilon} = \min \curlybrackets{\epsilon, \pi_{\min}/2- N'/N}$, then
	\begin{equation}
	\begin{split}
	\prob \parenthesesbig{\norm{\wht{\vP}_0 - \vP} \leq 4 \pi_{\min}^{-1} \norm{\vP} (\epsilon + 1.5N'/N) } 
	&\geq
	\prob \parenthesesbig{\norm{\wht{\vP}_0 - \vP} \leq 4 \pi_{\min}^{-1} \norm{\vP} (\tilde{\epsilon} + 1.5N'/N)} \\
	&\geq 	
	1-24 n \tau_* \log({\tilde{\epsilon}}^{-1}) 
	\exp \parenthesesbig{- \frac{N}{100 \tau_* \pi_{\max} \log(\tilde{\epsilon}^{-1}) \tilde{\epsilon}^{-2} }} ,
	\end{split}
	\end{equation}	
	which concludes the proof.	
\end{proof}

\section{Proofs for Main Theorems} \label{sctn_mainproof}
In this section, we list the proofs for the main theorems appear in the paper. The main idea of the proof for Theorem \ref{thrm_main_MR} is inspired by \cite{zhang2018state}, which is developed for discrete Markov chains. We generalize the work in \cite{zhang2018state} to the case when Markov matrix is not exactly aggregatable and there are perturbations in the Markov chain trajectory, i.e., the mode sequence is estimated from the observation trajectory of an underlying switched system rather than being directly observed. 

\subsection{Proof for Theorem \ref{thrm_MCDiff}}
\begin{proof}
	From Section 3.6 in \cite{cho2001comparison}, one can easily obtain \eqref{eq_40}. For \eqref{eq_41}, we have
	\begin{equation}
		\norm{\vpi_t - \tilde{\vpi}_t}_1 \leq \norm{\vpi_t - \vpi}_1 + \norm{\tilde{\vpi}_t-\tilde{\vpi}}_1 + \norm{\vpi - \tilde{\vpi}}_1 .
	\end{equation}
	By Markov convergence theorem \cite{levin2017markov}, we could upper bound the first two terms and finish the proof.
\end{proof}

\subsection{Proof for Lemma \ref{lemma_NoMistakeCondition}}
\begin{proof}
	Suppose pair $\curlybrackets{y_t, \vphi_t}$ is generated by $\vw_i$, i.e. $y_t = \vw_i^\T \vphi_t + n_t$. Then based on Line \ref{algline_1} in Algorithm \ref{Alg_1}, $\wht{X}_t = X_t$ when $|y_t - \vw_i^\T \vphi_t| < |y_t - \vw_j^\T \vphi_t|, \forall j\neq i$, which is equivalent to
	\begin{equation}\label{eq_43}
		|n_t| < |(\vw_i-\vw_j)^\T \vphi_t + n_t| .
	\end{equation}
	A sufficient condition to guarantee \eqref{eq_43} is $(\vw_i - \vw_j)^\T \vphi_t > 2 n_{\max}, \forall j\neq i$.
\end{proof}

\subsection{Proof for Theorem \ref{thrm_main_MR}}
We first consider the case when there is no estimation error in empirical Markov matrix $\wht{\vP}$, i.e. $\wht{\vP} = \vP$, then generalize this to Theorem \ref{thrm_main_MR}.
\begin{lemma} \label{lemma_ClusteronTrueMatrix}
	Assume: (i) the framework in Section \ref{subsec_1} holds; (ii) in Algorithm \ref{Alg_1}, $\wht{\vP} = \vP$, i.e. the clustering is applied to the true Markov matrix; (iii) $\curlybrackets{\wht{\Omega}_1, \dots, \wht{\Omega}_r}$ is a  $(1+\epsilon_1)$ solution to the k-means problem in Algorithm \ref{Alg_1}. Then, if
	\begin{equation}
		\norm{\vDelta} \leq \frac{\sigma_r(\bar{\vP})}{8 \sqrt{(2+\epsilon_1) r}} \sqrt{\frac{|\Omega_{(r)}|}{ |\Omega_{(1)}|}  + 1} , 
	\end{equation}
	we have
	\begin{equation}
	MR(\wht{\Omega}_1, \wht{\Omega}_2, \dots, \wht{\Omega}_r)
	\leq 
	64(2+\epsilon) r \frac{\norm{\vDelta}^2}{\sigma_r(\bar{\vP})^2}.
	\end{equation}
\end{lemma}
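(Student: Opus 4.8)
The plan is to reduce the claim to a single application of the approximate $k$-means bound in Lemma \ref{lemma_kmeans}, taking the ``clean'' reference matrix to be the top-$r$ left singular factor of $\bar{\vP}$ and the ``noisy'' matrix to be $\vU_r$ (computed here from $\vP$, since $\wht{\vP}=\vP$). First I would record the geometry of the clean clustering. Writing $\bar{\vP} = \bar{\vM}\,\bar{\vC}_{\bar{\vP}}$, where $\bar{\vM}\in\mathbb{M}_{n,r}$ is the membership matrix of the true partition $\curlybrackets{\Omega_1,\dots,\Omega_r}$ and $\bar{\vC}_{\bar{\vP}}$ collects the $r$ distinct (full-row-rank, as $rank(\bar{\vP})=r$) rows of $\bar{\vP}$, Lemma \ref{lemma_3} gives $\bar{\vU}_r = \bar{\vM}\,\bar{\vC}$ with mutually orthogonal rows satisfying $\norm{\bar{\vC}(k,:)} = |\Omega_k|^{-1/2}$ and $\norm{\bar{\vC}(k,:)-\bar{\vC}(l,:)} = \sqrt{1/|\Omega_k|+1/|\Omega_l|}$ for $k\neq l$. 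This identifies the natural separations $\delta_k := \min_{l\neq k}\norm{\bar{\vC}(l,:)-\bar{\vC}(k,:)}$, which satisfy $\delta_k^2\geq 1/|\Omega_k|$ and $\min_k|\Omega_k|\delta_k^2 = 1+|\Omega_{(r)}|/|\Omega_{(1)}|$, the worst case being attained at the smallest cluster.

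Next I would control how far $\vU_r$ is from $\bar{\vU}_r$. Applying Lemma \ref{lemma_AdvancedWedinBound} with $\vA = \bar{\vP}$ and $\wht{\vA} = \vP$ (so $\vA-\wht{\vA} = -\vDelta$) and using $\sigma_{r+1}(\bar{\vP})=0$ yields $\norm{\sin\vTheta(\bar{\vU}_r,\vU_r)} \leq 2\norm{\vDelta}/\sigma_r(\bar{\vP})$; since the $\sin\vTheta$ matrix is $r\times r$ and diagonal, $\norm{\sin\vTheta(\bar{\vU}_r,\vU_r)}_F^2 \leq 4r\norm{\vDelta}^2/\sigma_r(\bar{\vP})^2$. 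Because $k$-means is invariant under right-multiplication of the data by an orthogonal matrix, I may rotate $\vU_r$: Lemma \ref{lemma_subspacedistance} supplies an orthogonal $\vO^*$ with $\norm{\bar{\vU}_r - \vU_r\vO^*}_F^2 \leq 2\norm{\sin\vTheta(\bar{\vU}_r,\vU_r)}_F^2 \leq 8r\norm{\vDelta}^2/\sigma_r(\bar{\vP})^2$, and the $(1+\epsilon_1)$ solution $\curlybrackets{\wht{\Omega}_k}$ obtained on $\vU_r$ remains a $(1+\epsilon_1)$ solution on $\vU := \vU_r\vO^*$.

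Finally I would invoke Lemma \ref{lemma_kmeans} with $\bar{\vU}=\bar{\vU}_r$, $\vU=\vU_r\vO^*$, and the $\delta_k$ above. The hypothesis on $\norm{\vDelta}$ is precisely what makes the separation condition $(16+8\epsilon_1)\norm{\bar{\vU}-\vU}_F^2/\delta_k^2 < |\Omega_k|$ hold for every $k$: inserting the Frobenius bound and the worst case $\min_k|\Omega_k|\delta_k^2 = 1+|\Omega_{(r)}|/|\Omega_{(1)}|$ rearranges exactly to $\norm{\vDelta} < \tfrac{\sigma_r(\bar{\vP})}{8\sqrt{(2+\epsilon_1)r}}\sqrt{|\Omega_{(r)}|/|\Omega_{(1)}|+1}$. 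The lemma then guarantees that, after the best relabeling, only points in $\bigcup_k S_k$ can be misclustered, and since $\Omega_j\setminus G = S_j$ this gives $MR \leq \sum_k |S_k|/|\Omega_k|$. Using $\delta_k^2\geq 1/|\Omega_k|$ together with $\sum_k |S_k|\delta_k^2 \leq 4(4+2\epsilon_1)\norm{\bar{\vU}-\vU}_F^2$ and the Frobenius bound produces $MR \leq 4(4+2\epsilon_1)\cdot 8r\,\norm{\vDelta}^2/\sigma_r(\bar{\vP})^2 = 64(2+\epsilon_1)r\,\norm{\vDelta}^2/\sigma_r(\bar{\vP})^2$, as claimed.

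The main obstacle is the alignment bookkeeping: left singular vectors are only defined up to an orthogonal rotation of the rank-$r$ subspace, so one must pass from the subspace ($\sin\vTheta$) distance to an honest row-wise Euclidean distance between $\bar{\vU}_r$ and a rotated copy $\vU_r\vO^*$ before Lemma \ref{lemma_kmeans} applies, while simultaneously arguing that this rotation preserves $(1+\epsilon_1)$-optimality of the $k$-means solution. The remaining delicate point is pinning down the cluster-size-dependent separations $\delta_k$ so that the single quantity $1+|\Omega_{(r)}|/|\Omega_{(1)}|$ governs both the perturbation condition (via $\min_k|\Omega_k|\delta_k^2$) and, through $\delta_k^2\geq 1/|\Omega_k|$, the final misclustering bound.
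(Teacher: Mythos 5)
Your proposal is correct and takes essentially the same route as the paper's proof: Lemma \ref{lemma_AdvancedWedinBound} combined with Lemma \ref{lemma_subspacedistance} to align $\vU_r$ and $\bar{\vU}_r$ up to an orthogonal rotation with Frobenius error $8r\norm{\vDelta}^2/\sigma_r(\bar{\vP})^2$, the row geometry from Lemma \ref{lemma_3}, and then Lemma \ref{lemma_kmeans} with the same $\delta_k$ bookkeeping and identical constants. The only cosmetic difference is that you rotate the data matrix (taking $\vU = \vU_r\vO^*$ and invoking orthogonal invariance of the k-means objective, which you correctly justify) while the paper rotates the clean matrix (taking $\bar{\vU} = \bar{\vU}_r\vQ$, which sidesteps that invariance argument), and you use the exact minimal separations $\delta_k$ rather than the paper's slightly smaller choice $\delta_k = \sqrt{1/|\Omega_k| + 1/|\Omega_{(1)}|}$; both choices yield the same worst case $1 + |\Omega_{(r)}|/|\Omega_{(1)}|$.
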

\begin{proof}
	From conditions in Lemma \ref{lemma_ClusteronTrueMatrix}, we see $rank(\bar{\vP})=r$. Let $\bar{\vP} = \bar{\vU}_r \bar{\vSigma}_r \bar{\vV}_r^\T$ be the SVD of $\bar{\vP}$, which only preserves the first $r$ singular value components, so $\bar{\vU}_r \in \dm{n}{r}, \bar{\vSigma}_r \in \dm{r}{r}, \bar{\vV}_r \in \dm{n}{r}$. Recall $\vU_r$ defined in the algorithm contains the first $r$ left singular vectors of $\vP$. We define $\vQ = \arg \min_{\vO \in O(r)} \norm{\vU_r - \bar{\vU}_r \vO }_F^2$, where $O(r)$ denotes the orthogonal group with dimension $r$. Then, we have
	\begin{equation} \label{eq_16}
	\begin{split}
	\norm{\vU_r - \bar{\vU}_r \vQ}_F &\overset{\text{(i)}}{\leq} \sqrt{2} \norm{\sin \vTheta (\vU_r, \bar{\vU}_r)}_F \\
	& \leq \sqrt{2r} \norm{\sin \vTheta (\vU_r, \bar{\vU}_r)} \\
	& \overset{\text{(ii)}}{\leq} \frac{2 \sqrt{2r} \norm{\bar{\vP} - \vP}}{\sigma_r (\bar{\vP})} \\
	& = \frac{2 \sqrt{2r} \norm{\Delta}}{\sigma_r (\bar{\vP})} ,
	\end{split}
	\end{equation}
	where (i) follows from Lemma \ref{lemma_subspacedistance}; (ii) follows from Lemma \ref{lemma_AdvancedWedinBound}.
	
	Also note that since $\vQ$ is orthogonal, for all $ i \in \Omega_k, j \in \Omega_l$ we have
	\begin{equation} \label{eq_14}
	\begin{split}
	& \norm{( \bar{\vU}_r \vQ ) ( i,: ) - ( \bar{\vU}_r \vQ ) ( j,: )}\\
	= & \norm{[\bar{\vU}_r( i,: ) - \bar{\vU}_r ( j,: )] \vQ} \\
	= & \norm{\bar{\vU}_r( i,: ) - \bar{\vU}_r ( j,: )} \\
	= &
	\begin{cases}
	0 & \text{ if } k=l \\ 
	\sqrt{\frac{1}{|\Omega_k|} + \frac{1}{|\Omega_l|}}& \text{ if } k \neq l 
	\end{cases} , 
	\end{split}
	\end{equation}
	where the last line follows from Lemma \ref{lemma_3}. Recall $\curlybrackets{\Omega_1, \dots, \Omega_r}$ is the partition of rows of $\bar{\vP}$ such that rows within the same cluster are the same. We can see in matrix $\bar{\vU}_r \vQ$, rows are the same if corresponding rows in $\bar{\vP}$ are in the same cluster, while different if corresponding rows in $\bar{\vP}$ are in different clusters. So, we can claim that the matrix $\bar{\vU}_r \vQ$ carries the same aggregation information as $\bar{\vP}$. Because of this, together with the fact that we apply k-means to $\vU_r$, we can apply Lemma \ref{lemma_kmeans} by replacing $\curlybrackets{\bar{\vU}, \vU}$ in Lemma \ref{lemma_kmeans} with $\curlybrackets{\bar{\vU}_r \vQ, \vU_r}$.
	
	To make condition \eqref{eq_15} in Lemma \ref{lemma_kmeans} hold, based on \eqref{eq_14}, we can pick
	\begin{equation}
	\delta_k = \sqrt{\frac{1}{|\Omega_k|} + \frac{1}{|\Omega_{(1)}|}} \qquad \forall k \in [r] .
	\end{equation}
	
	To make condition \eqref{eq_12} in Lemma \ref{lemma_kmeans} hold, with this choice of $\delta_k$, it suffices to guarantee for all $ k \in [r]$
	\begin{equation}
	\frac{|\Omega_k|}{|\Omega_{(1)}|}
	>
	8(2+\epsilon) \norm{\bar{\vU}_r \vQ - \vU_r}_F^2 - 1
	\end{equation}
	which, by applying \eqref{eq_16}, can be guaranteed by the following condition:
	\begin{equation} \label{eq_17}
	\norm{\vDelta} \leq \frac{\sigma_r(\bar{\vP})}{8 \sqrt{(2+\epsilon_1) r}} \sqrt{\frac{|\Omega_{(r)}|}{ |\Omega_{(1)}|}  + 1}.
	\end{equation}

	Therefore, with \eqref{eq_17}, and according to Lemma \ref{lemma_kmeans}, we can claim states in set $G$ defined in Lemma \ref{lemma_kmeans} can be correctly aggregated under relabeling invariance. Moreover,
	\begin{equation}
	\begin{split}
	MR(\hat{\Omega}_1, \dots, \hat{\Omega}_r)
	\leq & \sum_{k=1}^{r} \frac{|S_k|}{|\Omega_k|} \\
	\leq & \sum_{k=1}^{r} |S_k| \delta_k^2 \\
	\overset{\text{(i)}}{\leq} & 8(2+\epsilon) \norm{\bar{\vU}_r \vQ - \vU_r}_F^2 \\
	\overset{\text{(ii)}}{\leq} & \frac{64(2+\epsilon) r \norm{\Delta}^2}{\sigma_r (\bar{\vP})^2} ,
	\end{split}
	\end{equation}
	where $S_k$ is defined in Lemma \ref{lemma_kmeans}, (i) follows from \eqref{eq_13} and (ii) follows from \eqref{eq_16}.
\end{proof}

Now, with the analyses under the assumption that no estimation error in $\wht{\vP}$ exists, we go back to the proof for the general case, i.e. Theorem \ref{thrm_main_MR}.

\begin{proof}[Proof for Theorem \ref{thrm_main_MR}]
	Let $\wht{\vDelta} = \vDelta + (\wht{\vP} - \vP)$, then we can see $\wht{\vP} = \bar{\vP} + \wht{\vDelta}$. By applying Lemma \ref{lemma_ClusteronTrueMatrix} to $\wht{\vP}$ and $\wht{\vDelta}$, we can see when
	\begin{equation}\label{eq_20}
	\norm{\wht{\vDelta}}^2 
	\leq 
	\frac{\sigma_r (\bar{\vP})^2}{64(2+\epsilon_1)r} \parenthesesbig{ \frac{|\Omega_{(r)}|}{ |\Omega_{(1)}|}  + 1 } ,
	\end{equation}
	we have
	\begin{equation} \label{eq_29}
	MR(\wht{\Omega}_1, \wht{\Omega}_2, \dots, \wht{\Omega}_r)
	\leq 
	64(2+\epsilon_1) r \frac{\norm{\wht{\vDelta}}^2}{\sigma_r(\bar{\vP})^2} .
	\end{equation}
	
	To guarantee \eqref{eq_20}, by triangle inequality, it suffices to ensure
	\begin{equation} \label{eq_27}
	\norm{\wht{\vP} - \vP} 
	\leq 
	\frac{\sigma_r(\bar{\vP})}{8 \sqrt{(2+\epsilon_1) r}} \sqrt{\frac{|\Omega_{(r)}|}{ |\Omega_{(1)}|}  + 1} - \norm{\vDelta} .
	\end{equation}	
	Now, for all $ \epsilon_2 > 0$, let $\tilde{\epsilon}_2 = \min \curlybracketsbig{\epsilon_2, \frac{\pi_{\min}}{2}-\eta, \frac{\pi_{\min}}{4 (\sigma_1(\bar{\vP}) + \norm{\vDelta})} \parenthesesbig{\frac{\sigma_r(\bar{\vP})}{8 \sqrt{(2+\epsilon_1) r}} \sqrt{\frac{|\Omega_{(r)}|}{ |\Omega_{(1)}|}  + 1} - \norm{\vDelta}} }$. By Lemma \ref{lemma_ConcentrationofMC_withMistake}, we can see when $N \geq 200 \tau_* \pi_{\max} \log (\tilde{\epsilon}_2^{-1}) \tilde{\epsilon}_2^{-2} \squarebrackets{\log(24 n \tau_*) + \log (\log (\tilde{\epsilon_2}^{-1}))}$, with probability no less than
	\begin{equation} \label{eq_26}
	1 - \exp \parenthesesbig{- \frac{N}{200 \tau_* \pi_{\max} \log(\tilde{\epsilon}_2^{-1}) \tilde{\epsilon}_2^{-2} }}, 
	\end{equation}
	we have
	\begin{equation} \label{eq_28}
	\norm{\wht{\vP} - \vP} \leq 4 \pi_{\min}^{-1} \norm{\vP} (\epsilon_2 + 1.5 \eta).
	\end{equation}	
	By the choice of $\tilde{\epsilon}_2$, we also can see \eqref{eq_26} gives the probability lower bound on the occurrence of \eqref{eq_27}, which further lower bounds the occurrence probability of \eqref{eq_20}.
	
	Finally, plugging \eqref{eq_28} into \eqref{eq_29}, we have
	\begin{equation}
	\begin{split}
	MR(\wht{\Omega}_1, \wht{\Omega}_2, \dots, \wht{\Omega}_r)
	&\leq  64(2+\epsilon_1) r \parenthesesbig{\frac{\norm{\wht{\vDelta}}}{\sigma_r(\bar{\vP})}}^2 \\
	&\leq  64(2+\epsilon_1) r \parenthesesbig{\frac{\norm{\vDelta} + \norm{\wht{\vP} - \vP} }{\sigma_r(\bar{\vP})}}^2 \\
	&\leq  64(2+\epsilon_1) r \parenthesesbig{\frac{\norm{\vDelta} + 4 \pi_{\min}^{-1} \norm{\vP} (\epsilon_2 + 1.5 \eta) }{\sigma_r(\bar{\vP})}}^2 \\
	&\leq  64(2+\epsilon_1) r \parenthesesbig{\frac{\norm{\vDelta}}{\sigma_r(\bar{\vP})} + \frac{4  (\epsilon_2 + 1.5 \eta)  (\norm{\vDelta} + \norm{\bar{\vP}})}{\pi_{\min} \sigma_r(\bar{\vP})} }^2, \\
	\end{split}
	\end{equation}
	which concludes the proof.
\end{proof}

\subsection{Proof for Theorem \ref{thrm_main_P}}
\begin{proof}
	By triangle inequality, we see
	\begin{equation}
		\norm{\vP - \wtd{\vP}}_\infty \leq \norm{\vP - \wht{\vP}}_\infty + \norm{\wht{\vP} - \wtd{\vP}}_\infty .
	\end{equation}
	Assume $i \in \wht{\Omega}_s$, from Line \ref{algline_5} in Algorithm \ref{Alg_1}, we can see $\wtd{\vP}(i,:)$ is a convex combination of $\wht{\vP}(j,:), \forall j \in \Omega_s $. Therefore,
	\begin{equation}
	\begin{split}
	&\norm{\wht{\vP}(i,:)^\T - \wtd{\vP}(i,:)^\T}_1 \\
	\leq & \max_{k,j \in \wht{\Omega}_s} \norm{\wht{\vP}(k,:)^\T - \wht{\vP}(j,:)^\T }_1 \\
	\leq & \max_{k,j \in \wht{\Omega}_s} \norm{\wht{\vP}(k,:)^\T - {\vP}(k,:)^\T }_1 + \norm{\wht{\vP}(j,:)^\T - {\vP}(j,:)^\T }_1 + \norm{{\vP}(k,:)^\T - {\vP}(j,:)^\T }_1 \\
	\leq & 2 \norm{\wht{\vP} - \vP}_\infty + \max_{k,j \in \wht{\Omega}_s} \norm{\bar{\vP}(k,:)^\T - {\vP}(k,:)^\T }_1 + \norm{\bar{\vP}(j,:)^\T - {\vP}(j,:)^\T }_1 + \norm{\bar{\vP}(k,:)^\T - \bar{\vP}(j,:)^\T }_1 \\
	\leq & 2 \norm{\wht{\vP} - \vP}_\infty + 2\norm{\vDelta}_\infty ,
	\end{split}
	\end{equation}
	which gives
	\begin{equation}
		\norm{\vP - \wtd{\vP}}_\infty \leq 3 \norm{\wht{\vP} - \vP}_\infty + 2\norm{\vDelta}_\infty .
	\end{equation}
	Plugging in $\norm{\wht{\vP} - \vP} \leq 4 \pi_{\min}^{-1} \norm{\vP} (\epsilon_2 + 1.5 \eta)$ derived in the proof of Theorem \ref{thrm_main_MR}, we conclude the proof.
\end{proof}

\end{document}